\newtheorem{theorem}{Theorem}
\newtheorem{corollary}[theorem]{Corollary}
\newtheorem{observation}[theorem]{Observation}
\newenvironment{proof}[1][Proof]{\begin{trivlist}
\item[\hskip \labelsep {\bfseries #1}]}{\end{trivlist}}
 \newcommand{\ket}[1]{|#1\rangle}
 \newcommand{\bra}[1]{\langle #1|}
 \newcommand{\project}[1]{\ket{#1}\bra{#1}}
 \newcommand{\Id}{{\mathbb I}}
 \newcommand{\rank}{{\mathrm rank}}
 \newcommand{\T}{{\mathrm t}}
\newcommand{\Tr}{{\mathrm {Tr}}}
\newcommand{\diag}{{\mathrm {diag}}}
\newcommand{\etal}{\textit {et al.}}
\begin{document}


\title{Computable Measure of Total Quantum Correlation of Multipartite Systems}

\author{Javad Behdani}
 \email{behdani.javad@stu.um.ac.ir}
\affiliation{Department of Physics, Ferdowsi University of Mashhad, Mashhad, Iran}
\author{Seyed Javad Akhtarshenas}
 \email{akhtarshenas@um.ac.ir}
\affiliation{Department of Physics, Ferdowsi University of Mashhad, Mashhad, Iran}
\author{Mohsen Sarbishaei}
 \email{sarbishei@um.ac.ir}
\affiliation{Department of Physics, Ferdowsi University of Mashhad, Mashhad, Iran}


\begin{abstract}
Quantum discord as a measure of the quantum correlations cannot be easily computed for most of density operators.  In this paper, we present a measure of the total quantum correlations that is operationally simple and can be computed effectively for an arbitrary mixed state of a multipartite system. The measure is based on the coherence vector of the party whose quantumness is investigated  as well as  the correlation matrix of this part with the remainder of the system. Being able to detect the quantumness  of multipartite systems, such as detecting the quantum critical points in spin chains, alongside with the computability characteristic of the measure, make it a useful indicator to be exploited in the cases which are out of the scope of the other known measures.


\end{abstract}



\maketitle


\section{Introduction}

\label{s1}

The quantum nature of a state has been taken into rigorous consideration in the quantum information theory \cite{Einstein&Podolsky&Rosen:1935,Bell:1964}. For decades, entanglement was assumed to be the only source of quantum correlations, and lots of measures for this concept were introduced, such as entanglement of formation, distillable entanglement, entanglement of assistance, concurrence, and negativity \cite{Peres:1996,Bennett&etal:1996a,Bennett&etal:1996b,Bennett&etal:1996c,Vedral&etal:1997,Popescu&Rohrlich:1997,Cerf&Adami:1998,Vedral&Plenio:1998,Lewenstein&Sanpera:1998}. Entanglement is the source of many information theoretic merits in quantum systems, compared with the classical counterpart in the processing procedure. Quantum algorithms which benefit from  entanglement show more efficiency on top of their corresponding classical competitors \cite{R.Horodecki&etal:2009}.

However, entanglement is not the only aspect of quantum correlations, and it is shown that some tasks can speed up over their classical counterparts, even in the absence of entanglement. For example the deterministic quantum computation with one qubit (DQC1) uses only separable states \cite{Knill&Laflamme:1998}. In  \cite{Ollivier&Zurek:2001} Ollivier and Zurek have introduced quantum discord as a measure of quantum correlation, beyond entanglement. In addition, Henderson and Vedral have presented measures for total and classical correlations, which are closely related to the quantum discord \cite{Henderson&Vedral:2001}. It is later shown that quantum discord is, indeed,  a quantum correlation resource of DQC1 \cite{Datta&etal:2005,Datta&Vidal:2007,Datta&etal:2008}.

Quantum discord of a bipartite state $\rho$ is defined as the difference between two classically equivalent, but quantum mechanically different definitions of mutual information, i.e.
\begin{equation}
\label{QD}
\mathcal{D}^{(A)}(\rho)=\mathcal{I}(\rho)-\max_{\{\Pi_i^A\}}\mathcal{J}^{(A)}(\rho).
\end{equation}
Here, $\mathcal{I}(\rho)=S(\rho^{A})+S(\rho^{B})-S(\rho)$ is the mutual information of the bipartite state $\rho$, with $S(\cdot)$ being the von Neumann entropy, and $\rho^{A(B)}=\mathrm{Tr}_{B(A)}(\rho)$ is the marginal state corresponding to the subsystem $A(B)$. In addition, $\mathcal{J}^{\{\Pi_i^A\}}(\rho)$ denotes the mutual information of the same system after performing  the local measurement $\{\Pi_i^A\}$ on the subsystem $A$ and can be written as follows \cite{Ollivier&Zurek:2001}
\begin{equation}
\mathcal{J}^{\{\Pi_i^{A}\}}(\rho)=S(\rho^{B})-\sum_ip_iS(\rho|\Pi_i^{A}).
\end{equation}
In this equation $\{\Pi_i^{A}\}=\{|a_i\rangle\langle a_i|\}$ is the set of projection operators on the subsystem $A$, and the conditional state $\rho|\Pi_i^{A}$ is the post-measurement state, i.e.  $\rho|\Pi_i^{A}=\frac{1}{p_i}(\Pi_i^{A}\otimes\mathbb{I}^{B})\rho(\Pi_i^{A}\otimes\mathbb{I}^{B})$, where  $p_i=\mathrm{Tr}[(\Pi_i^{A}\otimes\mathbb{I}^{B})\rho(\Pi_i^{A}\otimes\mathbb{I}^{B})]$ is the probability of the $i$-th outcome of the measurement on the subsystem A \cite{Rulli&Sarandy:2011}. The maximum performed in Eq. \eqref{QD} can be taken also over all the positive operator valued measures (POVM), and these two definitions give, in general, nonequivalent results (see for example \cite{Xu:2011b,Galve&etal:2011,Modi&etal:2012,Shi&etal:2012}).

It would be interesting to mention that the quantum discord, as a measure of quantum correlation, has the following properties \cite{Henderson&Vedral:2001,Modi&etal:2012}: \textit{(i)} $\mathcal{D}^{(A)}(\rho)\ge 0$, where the equality is satisfied  if and only if $\rho$ is classical with respect to the part  $A$, i.e. if and only if $\rho$ is the so-called  classical-quantum state defined by
\begin{equation}
\chi^{A}=\sum_ip_i\Pi_i^{A}\otimes \rho_i^{B},
\label{cqs}
\end{equation}
with $\Pi_i^A=\ket{i}\bra{i}$ as the projection operator on the orthonormal basis of $\mathcal{H}^{A}$, and $\rho_i^B$ being a state on $\mathcal{H}^{B}$.
\textit{(ii)}
$\mathcal{D}^{(A)}(\rho)$ is invariant under local unitary transformations $U_A\otimes U_B$, i.e. $\mathcal{D}^{(A)}((U_A\otimes U_B)\rho(U_A\otimes U_B)^\dagger)=\mathcal{D}^{(A)}(\rho)$.
\textit{(iii)}
Quantum discord reduces to an entanglement monotone for pure states. Indeed $\mathcal{D}^{(A)}(\psi)={\mathcal E}(\psi)$ where ${\mathcal E}(\psi)$ is the entropy of entanglement of the pure state $\ket{\psi}$.
\textit{(iv)} Quantum discord  is not symmetric under the exchange of its two subsystems, i.e. $\mathcal{D}^{(A)}(\rho)$ is not equal to $\mathcal{D}^{(B)}(\rho)$ in general.
\textit{(v)} Quantum discord $\mathcal{D}^{(A)}(\rho)$ can be created by the local channel acting on  part $A$ whose classicality is tested \cite{Gessner&etal:2012}.
\textit{(vi)} Quantum discord $\mathcal{D}^{(A)}(\rho)$ does not increase under the local operations acting on the unmeasured part $B$. For a detailed discussion about the general properties of a measure of classical or quantum correlation, see Refs. \cite{Ollivier&Zurek:2001,Henderson&Vedral:2001,Dakic&etal:2010,Modi&etal:2012,Brodutch&Modi:2012,Gessner&etal:2012,Rulli&Sarandy:2011,Liu&etal:2013,Koashi&Winter:2004}. Specially, in Ref. \cite{Brodutch&Modi:2012} the authors present three types of conditions and call them necessary, reasonable and debatable conditions. Quantum discord satisfies all necessary conditions, besides some reasonable and debatable ones.

Quantum discord is really difficult to be computed, and the analytical solution for this measure has been found only for few special states \cite{Ollivier&Zurek:2001,Henderson&Vedral:2001,Luo:2008,Xu:2013}. This fact makes people look for alternative measures. Among the various measures of quantum correlation, the geometric discord which is first proposed by Dakic  {\etal}, is a simple and intuitive quantifier of the general quantum  correlations  and is defined by \cite{Dakic&etal:2010}
\begin{equation}
\label{GDD}
\mathcal{D}_G^{(A)}(\rho)=\min_{\chi^{A}\in\mathcal{C}^A}\parallel\rho-\chi^{A}\parallel^2_2,
\end{equation}
where $\mathcal{C}^A$ denotes the set of all classical-quantum states defined by Eq. \eqref{cqs}.  In addition, $\|X-Y\|^2_2=\Tr(X-Y)^2$ is the 2-norm or square norm in the Hilbert-Schmidt space. This measure vanishes for the classical-quantum states. Dakic {\etal} also obtain a closed formula for the geometric discord of an arbitrary two-qubit state in terms of its coherence vectors and correlation matrix.
Furthermore, an exact expression for the pure $m\otimes m$ states and arbitrary $2\otimes n$ states are obtained \cite{Luo&Fu:2010,Luo&Fu:2011}.
Luo and Fu show that the definition \eqref{GDD} can also be written as \cite{Luo&Fu:2010}
\begin{equation}
D_G(\rho)=\min_{\Pi^A} \|\rho-\Pi^A(\rho)\|^2_2,
\label{GD-Luo2}
\end{equation}
where the minimum is taken over all von Neumann measurements $\Pi^A=\{\Pi^A_k\}_{k=1}^m$ on ${\mathcal H}^A$, and $\Pi^A(\rho)=\sum_{k=1}^m(\Pi^A_k\otimes \Id)\rho(\Pi^A_k\otimes \Id)$ with $\mathbb{I}$ being the identity operator on the appropriate space. A tight lower bound on the geometric discord of a general bipartite system is given in Refs. \cite{Rana&Parashar:2012,Hassan&etal:2012}, and it is shown that such  lower bound fully captures the quantum correlation of a generic bipartite system, so it  can be used as a measure of quantum correlation in its own right \cite{Akhtarshenas&etal:2015}.

Geometric discord satisfies the properties \textit{(i)-(v)} of the original quantum discord listed above, but as pointed out by Piani \cite{Piani:2012} it fails to satisfy the last property \textit{(vi)}; this measure may increase under local operations on the unmeasured subsystem,  so that one may believe that geometric discord cannot be considered as a good measure of quantum correlations \cite{Piani:2012}.
The lack of contractivity under trace-preserving quantum channels rises from using the Hilbert-Schmidt distance in the definition, and implies that this measure should only be interpreted as a lower bound to an eligible quantum discord measure \cite{Tufarelli&etal:2013}. In addition, the sensitivity of Hilbert-Schmidt norm to the purity of the state, makes the lack of usefulness for this measure even as a lower bond for quantumness in high dimensions \cite{Passante&etal:2012,Brown&etal:2012,Tufarelli&etal:2012}. Note that both problems maybe tackled by choosing a proper norm such as Bures distance or trace distance, but it costs the hardness of calculations to find exact solutions for an arbitrary state. However the latter bug can be solved by removing the sensitivity to the purity of state as is done in Ref. \cite{Tufarelli&etal:2013}, while the simplicity of measure is preserved by using the Hilbert-Schmidt norm yet. The other problem is remedied in Ref. \cite{Paula&etal:2013}, where the Schatten 1-norm is used instead of the Hilbert-Schmidt norm, although the exact solution will not be available for many cases any more. Some other attempts is done in this field, such as using Bures distance in the definition \cite{Spehner&Orszag:2013} or using the square root of density operator rather than the density matrix itself \cite{Chang&Luo:2013}, while these methods make the calculations more complicated too. However geometric discord still can be considered as a good indicator for quantumness of correlations, because of its capability of being computed and its operational significance in specific quantum protocols such as remote state preparation \cite{Dakic&etal:2012}.
Recently, the authors of Ref.  \cite{Akhtarshenas&etal:2015} have introduced the notion of the $A$-correlation matrix and proposed a geometric way of quantifying quantum correlations which can be calculated analytically for the general Schatten $p$-norms and arbitrary functions of the coherence vector of the unmeasured subsystem. They have shown that the measure includes the geometric discord as a special case. Moreover, they have introduced a measure of quantum correlation which is invariant under local quantum channels performing on the unmeasured part, and showed that a way to circumvent the problem with the geometric discord is to re-scale the original geometric discord just by dividing it by the purity of the unmeasured part.

Some attempts have been made in order to generalize the above measures for multipartite systems  \cite{Rulli&Sarandy:2011,Okrasa&Walczak:2011,Hassan&Joag:2012,Xu:2012,Xu:2013}. To generalize a quantum correlation  measure for multipartite states, one method is to symmetrize it with respect to its subsystems. The symmetrization of the quantum discord can be done by measuring, locally, all the subsystems in one step \cite{Rulli&Sarandy:2011}
\begin{equation}
\mathcal{D}(\rho)\equiv\mathcal{D}^{(AB)}(\rho)=\mathcal{I}(\rho)-\max_{\Pi^{AB}}\mathcal{I}\left(\Pi^{AB}(\rho)\right),
\end{equation}
where
\begin{equation}
\Pi^{(AB)}(\rho)=\sum_{i,j}(\Pi_i^A\otimes\Pi_j^B)\rho(\Pi_i^A\otimes\Pi_j^B).
\end{equation}
Some generalization of the symmetric discord to the multipartite states are presented  Refs. \cite{Rulli&Sarandy:2011,Xu:2013}.
For the geometric discord, the generalization is straightforward too and is called the geometric global quantum discord by Xu in Ref. \cite{Xu:2012}
\begin{equation}
\mathcal{D}_{GG}(\rho)=\min_{\chi\in\mathcal{C}}\parallel\rho-\chi\parallel^2_2.
\label{ggqd}
\end{equation}
Here, $\mathcal{C}$ denotes the set of completely classical states which can be written as $\chi=\sum_{i_1,\cdots,i_m}p_{i_1\cdots i_m}\Pi_{i_1}^{A_1}\otimes\cdots\otimes\Pi_{i_m}^{A_m}$, with $\Pi_{i_s}^{A_s}=\project{i_s}$ being the projection operator on the orthonormal basis of the subsystem $A_s$.

The second approach to generalize a measure of  quantum correlation to multipartite systems, is based on the fact that a classical-quantum state can still have quantum correlations which cannot be captured by $\mathcal{D}^{(A)}(\rho)$. This follows from the fact that the states $\rho_i^{B}$, appeared in the definition of classical-quantum states in Eq. \eqref{cqs}, do not commute in general  \cite{Piani&etal:2008,Wu&etal:2009,Okrasa&Walczak:2011}.
It  turns out that the remaining quantum correlations of the state $\rho$ can be obtained  just by calculating $\mathcal{D}^{(B)}(\tilde{\Pi}^A(\rho))$, where $\tilde{\Pi}^A(\rho)$ denotes the post-measurement state of the system after performing the optimized measurement $\tilde{\Pi}^A$ on the part $A$. We therefore obtain the total quantum correlations of the bipartite state $\rho$ as \cite{Okrasa&Walczak:2011,Hassan&Joag:2012}
\begin{equation}
\mathcal{Q}(\rho)=\mathcal{D}^{(A)}(\rho)+\mathcal{D}^{(B)}(\tilde{\Pi}^A(\rho)).
\end{equation}
An extension of this method to the multipartite case is presented in Refs. \cite{Okrasa&Walczak:2011,Hassan&Joag:2012}.

In addition to the different applications of quantum correlations in various areas of quantum information theory,  quantum correlation measures are considered vastly as an important tool to detect the quantum phase transition (QPT) phenomenon \cite{Werlang&etal:2010,Rulli&Sarandy:2011,Cai&etal:2011,Xi&etal:2011,Jie&etal:2012,Sarandy&etal:2013,Campbell&etal:2013}. Quantum discord, as a measure of quantum correlation, can be beneficial in this investigation too. For example, various correlations are studied in the Heisenberg $XXZ$ spin chains, both in the thermal equilibrium and under the intrinsic decoherence, and the QPT has been seen in all of the surveyed correlations, except for the classical one \cite{Cai&etal:2011}. However, all the quantum correlation measures are not always able to detect the critical points. In Ref. \cite{Werlang&etal:2010} it is shown that quantum discord can spotlight the critical points associated to the quantum phase transitions for an infinite chain described by the Heisenberg model, even at finite T, while the entanglement of formation and other thermodynamic quantities cannot. Similar results are observed for a Heisenberg $XXZ$ spin chain after quenches \cite{Jie&etal:2012}. Some other advantages of quantum discord on top of entanglement in the field of detecting QPTs are expressed in Refs. \cite{Xi&etal:2011,Sarandy&etal:2013}.

As spin chains are multipartite states, the generalized quantum correlation measure can be used to study their properties. A symmetrized version of quantum discord, presented in Ref. \cite{Sarandy&etal:2013}, is applied to survey some finite-size spin chains, i.e. transverse field Ising model, cluster-Ising model, and open-chain $XX$ model, and it is shown that, thanks this measure, the critical points can be neatly detected, even for many-body systems that are not in their ground state \cite{Campbell&etal:2013}. As it is shown for Ashkin-Teller spin chain \cite{Rulli&Sarandy:2011}, there are some cases where the pairwise entanglement or quantum discord are not able to detect QPT, while the multipartite quantum discord is capable of detecting it.

In this work,  we present a computable measure of the total quantum correlation of a general multipartite system.  This measure is based on the multipartite extension of the  $A$-correlation matrix of a bipartite state, introduced in Ref.  \cite{Akhtarshenas&etal:2015}. The measure is computable in the sense that it can be computed effectively for an arbitrary mixed state of a multipartite system. We exemplify the measure with some illustrative examples and investigate its ability to detect quantum critical  points in spin chains. The measure therefore can be used as a quantifier for quantum correlations as well as an indicator of the quantumness properties of the multipartite systems.

The remainder of the article is arranged as follows. In  Sec. \ref{s2}, we briefly introduce the measure of quantum correlation introduced recently in Ref. \cite{Akhtarshenas&etal:2015}, and generalize it to capture  total quantum correlation of a bipartite state. In  Sec.  \ref{s3} we generalize the measure for multipartite states. Section \ref{s4} is devoted to show the ability of our measure to detect the quantum critical points in a spin chain. The paper is concluded in Sec.  \ref{s5} with some discussion.

\section{Total quantum correlation: Bipartite systems}
\label{s2}
In this section, we first review the computable measure of the quantum correlation of the bipartite systems and
then extend it to capture all quantum correlations of the bipartite states.

\subsection{Computable measure of quantum correlation}
A general bipartite state $\rho$ on $\mathcal{H}^A\otimes\mathcal{H}^B$ with $\dim(\mathcal{H}^A)=d_A$ and  $\dim(\mathcal{H}^B)=d_B$ can be written as
\begin{eqnarray}
\rho=\frac{1}{d_Ad_B}
\bigg\{\mathbb{I}^A\otimes\mathbb{I}^B&+&\vec{x}\cdot\hat{\lambda}^A\otimes\mathbb{I}^B+\mathbb{I}^A\otimes\vec{y}\cdot\hat{\lambda}^B\nonumber\\
&+&\sum_{i=1}^{d_A^2-1}\sum_{j=1}^{d_B^2-1}t_{ij}\hat{\lambda}_i^A\otimes\hat{\lambda}_j^B\bigg\},
\label{rho_1}
\end{eqnarray}
where $\mathbb{I}^{A}$ and $\Id^{B}$ are the identity matrices on the resoective subspaces $\mathcal{H}^A$ and $\mathcal{H}^B$, while  $\{\hat{\lambda}_i^A\}_{i=1}^{{d_A}^2-1}$ and  $\{\hat{\lambda}_j^B\}_{j=1}^{{d_B}^2-1}$ are generators of $SU(d_A)$ and $SU(d_B)$, respectively, fulfilling the following relations
\begin{eqnarray}\label{SUmGellMann}
 \Tr{\hat{\lambda}_i^s}=0,\qquad \Tr(\hat{\lambda}_i^s\hat{\lambda}_j^s)=2\delta_{ij}, \qquad s=A,B.
\end{eqnarray}
Furthermore, $\vec{x}=(x_1,\cdots,x_{d_A^2-1})^{\T}$ and $\vec{y}=(y_1,\cdots,y_{d_B^2-1})^{\T}$, which are local coherence vectors of the subsystems $A$ and $B$, respectively, and $T=(t_{ij})$, which is called as the correlation matrix, can be written explicitly as follows
\begin{eqnarray}
\label{xyT}
{x}_i&=&\frac{d_A}{2}\mathrm{Tr}\left[(\hat{\lambda}^A_i\otimes\mathbb{I}^B)\rho\right],\nonumber\\
{y}_j&=&\frac{d_B}{2}\mathrm{Tr}\left[(\mathbb{I}^A\otimes\hat{\lambda}^B_j)\rho\right],\\
t_{ij}&=&\frac{d_Ad_B}{4}\mathrm{Tr}\left[(\hat{\lambda}^A_i\otimes\hat{\lambda}^B_j)\rho\right].\nonumber
\end{eqnarray}
Therefore, to each bipartite state $\rho$, one can associate the triple $\{\vec{x},\vec{y},T\}$.
It has been shown that a bipartite state $\rho$ is a classical-quantum state if and only if there exists a $(d_A-1)$-dimensional projection operator $P_A$ on the $(d_A^2-1)$-dimensional space $\mathbb{R}^{d_A^2-1}$ such that \cite{T.Zhou&etal:2011}
\begin{equation}
P_A\vec{x}=\vec{x},\qquad P_AT=T.
\end{equation}
These conditions can be combined as \cite{Akhtarshenas&etal:2015}
\begin{equation}
P_A\mathcal{T}^{A}_{f}=\mathcal{T}^{A}_{f},
\label{nsc1}
\end{equation}
where $\mathcal{T}^A_f$ is the so-called $A$-correlation matrix of the state $\rho$ and is defined as follows
\begin{equation}
\mathcal{T}^{A}_{f}=\sqrt{\frac{2}{d_A^2d_B}}\left(\begin{array}{cc}
f_1({y})\vec{x}\quad&\quad\sqrt{\frac{2}{d_B}}f_2({y})T\end{array}\right).
\end{equation}
Here, $f=\{f_1({y}),f_2({y})\}$, while $f_1({y})$ and $f_2({y})$ are two arbitrary functions of $y=\sqrt{\vec{y}^{\T}\vec{y}}$, with $\vec{y}$ being the local coherence vector of the subsystem $B$. Accordingly, the computable  measure of the $A$-quantum correlation, i.e. quantum correlation of  part $A$,  leads to \cite{Akhtarshenas&etal:2015}
\begin{equation}
\mathcal{Q}_{A}^{f}(\rho)=\min_{P_A}\parallel\mathcal{T}^{A}_f-P_A\mathcal{T}^{A}_f\parallel^2_2=\sum_{k=d_A}^{d_A^2-1}\tau_k^{A,f\downarrow},
\label{ANGDf}
\end{equation}
where $\parallel . \parallel^2_2$ is the square norm in the Hilbert-Schmidt space, $\left\{\tau_k^{A,f\downarrow}\right\}$ are eigenvalues of
$(\mathcal{T}^{A}_f)({\mathcal{T}^A}_f)^{\dagger}=\frac{2}{d_A^2d_B}\left(|f_1(y)|^2 \vec{x}\vec{x}^{\T}+\frac{2}{d_B}|f_2(y)|^2T T^{\T}\right)$
in non-increasing order, and the index $\T$ stands for the transposition. The above measure satisfies all the properties of the original quantum discord if we choose a suitable function for $f$.  Indeed, as it is shown in Ref. \cite{Akhtarshenas&etal:2015}, $\mathcal{Q}_{A}^{f}(\rho)$ satisfies properties \textit{(i),(ii),(iv)} and \textit{(v)}, for a general $f$. Remember that  $\mathcal{Q}_{A}^{f}(\rho)$ is based on the degree to which the necessary and sufficient condition for classicality of part $A$ fails to be satisfied, so  it becomes nonzero for any state obtained by applying suitable local channels \cite{Streltsov&etal:2011b} on a classical state.  Moreover, for any maximally entangled state $\ket{\Psi}=\frac{1}{\sqrt{d_A}}\sum_{i=1}^{d_A}\ket{ii}$,  one can find $\mathcal{Q}_A^f(\rho)=\frac{[d_A(d_A-1)]}{d_A^2}|f_2(0)|^2$, which  achieves its maximum value if $|f_2(y)|$ is a constant or decreasing function of $y=\sqrt{\vec{y}^{\T}\vec{y}}$. As a result, in order to reduce to an entanglement monotone for pure states, i.e. $\mathcal{Q}_{A}^{f}(\rho)$ satisfies the property \textit{(iii)}, $|f_2(y)|$ should be a non-increasing function of $y$. Finally, as it is shown in Ref. \cite{Akhtarshenas&etal:2015}, $\mathcal{Q}_{A}^{f}(\rho)$ may increase under reversible actions on  part $B$, except for the case of choosing  $f_1({y})=f_2({y})=\frac{1}{\sqrt{\mu(\rho^B)}}$, where $\rho^B=\Tr_A(\rho)$ is the reduced density matrix for the subsystem $B$ and $\mu(\cdot)=\Tr[(\cdot)^2]$ is the purity of the state. It turns out that the property \textit{(vi)} is not satisfied, except for this particular case.
To be more specific, let us  concern our attention to the following two interesting choices for $f$.
First, consider $f_1(y)=f_2(y)=1$. Writing   $\mathcal{Q}_{A}(\rho)=\mathcal{Q}_{A}^{f=1}(\rho)$ and $\mathcal{T}^{A}_{f=1}=\mathcal{T}^{A}$ for simplicity, we find
\begin{equation}
\mathcal{Q}_{A}(\rho)=\sum_{k=d_A}^{d_A^2-1}\tau_k^{A\downarrow},
\label{ANGD1}
\end{equation}
where $\left\{\tau_k^{A\downarrow}\right\}$ are the eigenvalues of $(\mathcal{T}^{A})({\mathcal{T}^A})^{\T}=\frac{2}{d_A^2d_B}\left(\vec{x}\vec{x}^{\T}+\frac{2}{d_B}T T^{\T}\right)$ in non-increasing order. As it is shown in Ref. \cite{Akhtarshenas&etal:2015}, in this particular case, the measure is a tight lower bound for the geometric discord \cite{Luo&Fu:2010,Hassan&etal:2012,Rana&Parashar:2012}, and  coincides with it, when the first subsystem is a qubit. Moreover, as mentioned above, in this particular case, all the properties of the original quantum discord, except for the property \textit{(vi)}, are satisfied.

Secondly, let us consider the more interesting case $f_1({y})=f_2({y})=\frac{1}{\sqrt{\mu(\rho^B)}}$, with $\mu(\rho^B)$ as the purity of the reduced state $\rho^B$. Denoting the corresponding $A$-correlation matrix and  $A$-quantum correlation by $\mathcal{T}^{A}_{\mu}$ and $\mathcal{Q}_{A}^{\mu}(\rho)$, respectively, we find
\begin{equation}
\mathcal{Q}_{A}^\mu(\rho)=\min_{P_A}\parallel\mathcal{T}_{\mu}^{A}-P_A\mathcal{T}_{\mu}^{A}\parallel^2_2
=\frac{1}{\mu(\rho^B)}\sum_{k=d_A}^{d_A^2-1}\tau_k^{A\downarrow}.
\label{ANGDmu}
\end{equation}
As it is mentioned above, an important feature of this particular choice for $f$ is that $\mathcal{Q}_{A}^\mu(\rho)$ satisfies all the properties \textit{(i)} to \textit{(vi)} of the original quantum discord. Considering the conditions  presented in Ref. \cite{Brodutch&Modi:2012}, $\mathcal{Q}_{A}^\mu(\rho)$ satisfies all  the necessary conditions for a proper measure of quantum correlation, besides some debatable ones.

Similarly, one can say that the bipartite state $\rho$ is a quantum-classical state, if and only if there exists a $(d_B-1)$-dimensional projection operator $P_B$ on the $(d_B^2-1)$-dimensional space $\mathbb{R}^{d_B^2-1}$ such that
\begin{equation}
P_B\vec{y}=\vec{y}, \quad TP_B=T,
\end{equation}
or equivalently,
\begin{equation}
P_B\mathcal{T}^{B}_{f^\prime}=\mathcal{T}^{B}_{f^\prime},
\label{nsc2}
\end{equation}
where $\mathcal{T}^B_{f^\prime}$ is  the $B$-correlation matrix of the state $\rho$, defined by
\begin{equation}
\mathcal{T}^{B}_{f^\prime}=\sqrt{\frac{2}{d_Ad_B^2}}\left(\begin{array}{cc}
{f^\prime}_1({x})\vec{y}\quad & \quad \sqrt{\frac{2}{d_A}}{f^\prime}_2({x})T^\T\end{array}\right).
\label{bcor}
\end{equation}
with $f^\prime=\{f_1^\prime({x})$, $f_2^\prime({x})\}$ being two arbitrary functions of $x=\sqrt{\vec{x}^{\T}\vec{x}}$, while $\vec{x}$ is the local coherence vector of the subsystem $A$. In a similar manner, we can choose one of the two aforementioned interesting choices for $f^\prime$ and  define $\mathcal{Q}_{B}(\rho)$ and $\mathcal{Q}_{B}^\mu(\rho)$ as counterparts of $\mathcal{Q}_{A}(\rho)$ and $\mathcal{Q}_{A}^\mu(\rho)$, respectively. Letting  $f^\prime_1({x})=f^\prime_2({x})=1$  and writing $\mathcal{T}^{B}_{f^\prime=1}=\mathcal{T}^{B}$, one can define the quantum correlation with respect to the part $B$, as follows
\begin{equation}
\mathcal{Q}_{B}(\rho)=\min_{P_B}\parallel\mathcal{T}^{B}-P_B\mathcal{T}^{B}\parallel^2_2=\sum_{k=d_B}^{d_B^2-1}\tau_k^{B\downarrow},
\label{BNGD}
\end{equation}
where $\left\{\tau_k^{B\downarrow}\right\}$ are the eigenvalues of $(\mathcal{T}^{B})({\mathcal{T}^B})^{\T}$ in non-increasing order.
On the other hand, choosing $f^\prime_1({x})=f^\prime_2({x})=\frac{1}{\sqrt{\mu(\rho^A)}}$, one can define ${Q}_{B}^\mu(\rho)$ as
\begin{equation}
\mathcal{Q}_{B}^\mu(\rho)=\min_{P_B}\parallel\mathcal{T}_{\mu}^{B}-P_B\mathcal{T}_{\mu}^{B}\parallel^2_2
=\frac{1}{\mu(\rho^A)}\sum_{k=d_B}^{d_B^2-1}\tau_k^{B\downarrow}.
\end{equation}
Let us mention here that for the real matrix $\mathcal{T}$,  $\rank(\mathcal{T})=\rank(\mathcal{T}\mathcal{T}^\T)=\rank(\mathcal{T}^\T\mathcal{T})$. Accordingly, condition \eqref{nsc1} implies that a bipartite state $\rho$ is a classical-quantum state, if and only if $\rank({\mathcal{T}^A})\le d_A-1$. Similarly, according to the Eq. \eqref{nsc2}, $\rho$ is a quantum-classical state if and only if $\rank({\mathcal{T}^B})\le d_B-1$.

\subsection{Total quantum correlation of bipartite states}
Let $\rho$ be a bipartite state with associated  triple $\{\vec{x},\vec{y},T\}$, and  $P_A$ and $P_B$ be the $(d_A-1)$- and   $(d_B-1)$-dimensional projection operators acting on the parameter spaces of the first and second subsystems, respectively, namely  on $\mathbb{R}^{d_A^2-1}$ and $\mathbb{R}^{d_B^2-1}$. Let us now perform $P_A$ followed by $P_B$ on the triple $\{\vec{x},\vec{y},T\}$. After performing these projection operators, the triple $\{\vec{x},\vec{y},T\}$ changes, successively, to $\{P_A\vec{x},\vec{y},P_AT\}$ and $\{P_A\vec{x},P_B\vec{y},(P_AT)P_B\}$, associated with the classical-quantum and classical-classical states $\rho{_{{{P}_A}}}$ and $\rho{_{{{P}_B}{{P}_A}}}$, respectively. It follows therefore, that the passage from the initial state $\rho$ to a classical-classical state occurs in two successive steps, corresponding to the local actions $P_A$ followed by $P_B$.  As a result, we can calculate quantum correlations of  the initial and intermediate states $\rho$ and $\rho{_{\tilde{P}_A}}$  corresponding to the respective triples $\{\vec{x},\vec{y},T\}$ and $\{\tilde{P}_A\vec{x},\vec{y},\tilde{P}_AT\}$ as
\begin{eqnarray}
\label{dp1}\mathcal{Q}_{A}(\rho)&=&\min_{P_A}\parallel\mathcal{T}^{A}-P_A\mathcal{T}^{A}\parallel^2_2,\\
\mathcal{Q}_{B}(\rho_{\tilde{P}_A})&=&\min_{P_B}\parallel\mathcal{T}_{\tilde{P}_A}^{B}-P_B\mathcal{T}_{\tilde{P}_A}^{B}\parallel^2_2,
\end{eqnarray}
where $\tilde{P}_A$ is the  projection operator that optimizes Eq. \eqref{dp1}, and $\mathcal{T}_{\tilde{P}_A}^{B}$ is  the $B$-correlation matrix of $\rho_{\tilde{P}_A}$, i.e.
\begin{equation}
\mathcal{T}_{\tilde{P}_A}^{B}=\sqrt{\frac{2}{d_Ad_B^2}}\left(\begin{array}{cc}
\vec{y}\quad & \quad \sqrt{\frac{2}{d_A}}[\tilde{P}_AT]^{\T}\end{array}\right).
\end{equation}
We now define the total $AB$-quantum correlation of $\rho$ as the sum of $A$-quantum correlation of $\rho$ and $B$-quantum correlation of $\rho_{\tilde{P}_A}$, i.e.
\begin{equation}
\mathcal{Q}_{AB}(\rho)=\mathcal{Q}_{A}(\rho)+\mathcal{Q}_{B}(\rho_{\tilde{P}_A}),
\end{equation}
which finally takes the following form after some calculations
\begin{eqnarray}\nonumber
\mathcal{Q}_{AB}(\rho)&=&\min_{P_A}\parallel\mathcal{T}^{A}-P_A\mathcal{T}^{A}\parallel^2_2
+\min_{P_B}\parallel\mathcal{T}_{\tilde{P}_A}^{B}-P_B\mathcal{T}_{\tilde{P}_A}^{B}\parallel^2_2 \\
&=&\sum_{k=d_A}^{d_A^2-1}(\tau^{A\downarrow})_{k}+\sum_{k=d_B}^{d_B^2-1}(\tau_{\tilde{P}_A}^{B\downarrow})_{k},
\label{tqcb}
\end{eqnarray}
where $\left\{(\tau^{A\downarrow})_{k}\right\}$ and $\left\{(\tau^{B\downarrow}_{\tilde{P}_A})_{k}\right\}$ are the eigenvalues of $(\mathcal{T}^{A})({\mathcal{T}^{A}})^{\T}$ and $(\mathcal{T}_{\tilde{P}_A}^{B})({\mathcal{T}_{\tilde{P}_A}^{B}})^{\T}$ in non-increasing order, respectively.
Now, using the notation of the previous section, one may find $\mu[\Tr_A(\rho)]$ and $\mu[\Tr_B(\rho_{\tilde{P}_A})]$ corresponding to the triples $\{\vec{x},\vec{y},T\}$ and $\{\tilde{P}_A\vec{x},\vec{y},\tilde{P}_AT\}$, respectively. As a result
\begin{eqnarray}
\mathcal{Q}_{AB}^\mu(\rho)&=&\mathcal{Q}_{A}^\mu(\rho)+\mathcal{Q}_{B}^\mu(\rho_{\tilde{P}_A})\\
&=&\frac{1}{\mu[\Tr_A(\rho)]}\sum_{k=d_A}^{d_A^2-1}(\tau^{A\downarrow})_{k}+\frac{1}{\mu[\Tr_B(\rho_{\tilde{P}_A})]}\sum_{k=d_B}^{d_B^2-1}(\tau_{\tilde{P}_A}^{B\downarrow})_{k}.\nonumber
\end{eqnarray}

\begin{corollary}
If $d_A<d_B$ then $Q_{AB}=Q_{A}$.
\end{corollary}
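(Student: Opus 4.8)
The plan is to reduce everything to showing that the second summand in $\mathcal{Q}_{AB}(\rho)=\mathcal{Q}_{A}(\rho)+\mathcal{Q}_{B}(\rho_{\tilde P_A})$ vanishes whenever $d_A<d_B$. By Eq.~\eqref{tqcb} we have $\mathcal{Q}_{B}(\rho_{\tilde P_A})=\sum_{k=d_B}^{d_B^2-1}(\tau_{\tilde P_A}^{B\downarrow})_{k}$, i.e.\ the sum of all eigenvalues of $(\mathcal{T}_{\tilde P_A}^{B})(\mathcal{T}_{\tilde P_A}^{B})^{\T}$ except the $d_B-1$ largest ones, a total of $d_B^2-d_B$ terms. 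Hence it suffices to prove that $(\mathcal{T}_{\tilde P_A}^{B})(\mathcal{T}_{\tilde P_A}^{B})^{\T}$, equivalently $\mathcal{T}_{\tilde P_A}^{B}$ itself, has rank at most $d_B-1$; then the $d_B^2-d_B$ eigenvalues entering the sum are all zero. This is exactly the criterion recorded at the end of Sec.~\ref{s2} via Eq.~\eqref{nsc2}: $\rho_{\tilde P_A}$ is quantum-classical if and only if $\rank(\mathcal{T}_{\tilde P_A}^{B})\le d_B-1$.

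The key computation is then to bound $\rank(\mathcal{T}_{\tilde P_A}^{B})$. Since $\mathcal{T}_{\tilde P_A}^{B}=\sqrt{\tfrac{2}{d_Ad_B^2}}\big(\vec{y}\ \ \sqrt{\tfrac{2}{d_A}}[\tilde P_AT]^{\T}\big)$, its column space is spanned by the single vector $\vec{y}$ together with the columns of $[\tilde P_AT]^{\T}$, i.e.\ the rows of $\tilde P_AT$. Because $\tilde P_A$ is a $(d_A-1)$-dimensional projection on $\mathbb{R}^{d_A^2-1}$, one has $\rank(\tilde P_A)=d_A-1$ and therefore $\rank(\tilde P_AT)\le d_A-1$, so the rows of $\tilde P_AT$ span a subspace of dimension at most $d_A-1$. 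Adjoining the one further column $\vec{y}$ raises the dimension of the span by at most one, whence $\rank(\mathcal{T}_{\tilde P_A}^{B})\le 1+(d_A-1)=d_A$. Invoking the hypothesis $d_A<d_B$, that is $d_A\le d_B-1$, gives $\rank(\mathcal{T}_{\tilde P_A}^{B})\le d_B-1$, which is precisely what was needed; hence $\mathcal{Q}_{B}(\rho_{\tilde P_A})=0$ and $\mathcal{Q}_{AB}(\rho)=\mathcal{Q}_{A}(\rho)$. The same argument applied with $\mathcal{T}^B_{\mu,\tilde P_A}$ in place of $\mathcal{T}^B_{\tilde P_A}$ gives $\mathcal{Q}^\mu_{AB}(\rho)=\mathcal{Q}^\mu_{A}(\rho)$ as well, since a scalar prefactor does not affect the rank.

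I do not expect a real obstacle here; the subtlety is conceptual rather than technical. One must resist the tempting but incorrect shortcut that $\rho_{\tilde P_A}$ being classical-quantum already forces $\mathcal{Q}_{B}$ to vanish — the whole motivation for the two-step construction is that a classical-quantum state generically still carries $B$-quantum correlation. The genuine reason the second term drops out is the dimensional bottleneck: measuring (projecting) on the smaller party $A$ with a rank-$(d_A-1)$ operator compresses the correlation matrix enough that the induced $B$-correlation matrix cannot have rank exceeding $d_A\le d_B-1$. It is worth noting that the bound $\rank(\mathcal{T}_{\tilde P_A}^{B})\le d_A$ is generically saturated, so the hypothesis cannot be relaxed to $d_A\le d_B$: when $d_A=d_B$ the sum may pick up the nonzero $d_B$-th eigenvalue.
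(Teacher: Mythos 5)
Your proof is correct and follows essentially the same route as the paper's: bound $\rank(\tilde P_A T)\le d_A-1$, conclude $\rank(\mathcal{T}^B_{\tilde P_A})\le d_A\le d_B-1$, and invoke the rank criterion for $\mathcal{Q}_B=0$. You merely spell out the intermediate steps (the extra column $\vec y$ adding at most one to the rank, and the scalar prefactor for the $\mu$ variant) that the paper leaves implicit.
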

\begin{proof}
Note that  for any $(d_A-1)$-dimensional projection operator $P_A$, acting on $\mathbb{R}^{d_A^2-1}$,  $\rank(P_AT)\le d_A-1$, which means that $\rank({\mathcal{T}_{\tilde{P}_A}^{B}})$ is at most $d_A$. However, for a bipartite state $\rho$, $Q_B(\rho)=0$ if and only if $\rank({\mathcal{T}^{B}})\le d_B-1$, implying that $\mathcal{Q}_{B}(\rho_{\tilde{P}_A})=0$  whenever  $d_A\le d_B-1$, or equivalently $d_A<b_B$.
\end{proof}

We can similarly define the total $BA$-quantum correlation of $\rho$ as $\mathcal{Q}_{BA}^f(\rho)=\mathcal{Q}_{B}^f(\rho)+\mathcal{Q}_{A}^f(\rho_{\tilde{P}_B})$ which  is not in general equal to  $\mathcal{Q}_{AB}^f(\rho)=\mathcal{Q}_{A}^f(\rho)+\mathcal{Q}_{B}^f(\rho_{\tilde{P}_A})$ (see the second example below).
This asymmetry is expected from the procedure of our definition of quantum correlation; first of all, we should find the closest  classical-quantum state $\rho_{\tilde{P}_A}$  to the initial state $\rho$ (with respect to the distance defined by Eq. \eqref{ANGDf}), and then find the closest classical-classical state $\rho_{\tilde{P}_B\tilde{P}_A}$ to the obtained classical-quantum state. However, this procedure does not need to be symmetric with respect to the exchange of the two parts $A$ and $B$, as it is illustrated in Fig. \ref{asym}.
It is therefore reasonable to define
\begin{equation}
\mathcal{Q}_{\{AB\}}^f(\rho)=\max\left\{\mathcal{Q}_{AB}^f(\rho),\mathcal{Q}_{BA}^f(\rho)\right\}
\label{symGDA}
\end{equation}
as the total quantum correlation that one can extract from $\rho$. In the above equations, we can use $f=1$ or $f=\mu$ as the proper choices for $f$.
\begin{figure}
\includegraphics[scale=0.8]{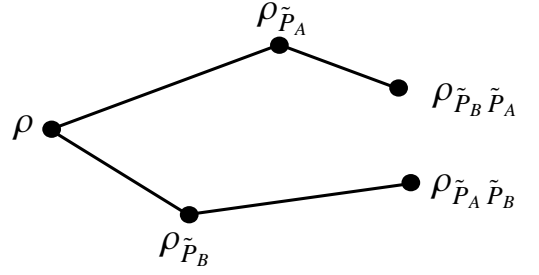}
\caption{A schematic illustration of the two different ways to capture  the total quantum correlation of a given bipartite state $\rho$. In general, neither the total quantum correlations, nor the final classical-classical states are the same.}
\label{asym}
\end{figure}

In order to clarify the defined measure, let us give some illustrative examples.

{\it $m\otimes m$ Werner states---}
A generic $m\otimes m$ Werner states can be defined by
\begin{equation}
\rho_{W}=\frac{m-x}{m^3-m}\mathbb{I}_{m\otimes m}+\frac{mx-1}{m^3-m}F;\qquad x\in[-1,1],
\label{wernerstate}
\end{equation}
where $F=\sum_{k,l=1}^m|kl\rangle\langle lk|$. It is easy to show that
$(\mathcal{T}^A)({\mathcal{T}^A})^{\T}$ has $(m^2-1)$ equal eigenvalues. Moreover,
$\mathcal{Q}_{B}(\rho_{\tilde{P_A}})=\mathcal{Q}_{A}(\rho_{\tilde{P_B}})=0$ and $\mu[\Tr_A(\rho_{W})]=\mu[\Tr_B(\rho_{W})]=\frac{1}{m}$, therefore
\begin{equation}
\mathcal{Q}_{\{AB\}}^{\mu}(\rho)=\mathcal{Q}_{A}^{\mu}(\rho)=\frac{(mx-1)^2}{(m-1)(m+1)^2}\;.
\end{equation}

{\it Mixture of  Bell-diagonal and product states---}
As the second bipartite example, let  $\rho$ be
\begin{equation}
\rho=\lambda\rho_{_{BD}}+(1-\lambda)\rho_p,
\label{asymstate}
\end{equation}
where $0\le \lambda \le 1$, and  $\rho_{_{BD}}$ is the so-called Bell-diagonal state \cite{Ryszard&Michal&Horodecki:1996}
\begin{equation}
\rho_{_{BD}}=\frac{1}{4}(\mathbb{I}\otimes\mathbb{I}+\sum_i t_i\sigma_i\otimes\sigma_i),
\end{equation}
with $t_i$ ($i=1,2,3$) satisfying
\begin{eqnarray}\nonumber
1-t_1-t_2-t_3\ge 0, \qquad 1-t_1+t_2+t_3\ge 0, \\
1+t_1-t_2+t_3\ge 0, \qquad 1+t_1+t_2-t_3\ge 0,
\end{eqnarray}
and $\rho_p$ being a product state
\begin{equation}
\rho_{p}=\frac{\mathbb{I}}{2}\otimes \frac{1}{2}(\mathbb{I}+\vec{r}\cdot\vec{\sigma}),
\end{equation}
with $\sum_{i=1}^3 |r_i|^2\le 1$.  It is straightforward to show that
\begin{equation}
\vec{x}=0, \qquad \vec{y}=(1-\lambda)\vec{r}, \qquad t_{ij}=\lambda t_i\delta_{ij}.
\end{equation}
Constructing $\mathcal{T}^{A}$, we will find
\begin{equation}
(\mathcal{T}^{A})({\mathcal{T}^A})^{\T}=\frac{\lambda^2}{4}\diag\{t_1^2,t_2^2,t_3^2\}.
\end{equation}
Without losing the generality of the problem, we suppose that $|t_1|\geq|t_2|\geq|t_3|$, leading therefore to
\begin{equation}
\mathcal{Q}_{A}(\rho)=\frac{1}{4}\lambda^2\big(t_2^2+t_3^2\big).
\end{equation}
This implies that the optimized projection operator is $\tilde{P}_A=\diag\{1,0,0\}$. Using Eq. \eqref{bcor}, we will find
\begin{equation}
\mathcal{T}_{\tilde{P_A}}^{B}=\frac{1}{2}
\left(\begin{array}{cccc}
r_1(1-\lambda)&t_1\lambda&0&0\\
r_2(1-\lambda)&0&0&0\\
r_3(1-\lambda)&0&0&0
\end{array}\right).
\end{equation}
This leads to
\begin{equation}
Q_B(\rho_{\tilde{P_A}})=\frac{1}{8}\left(h-\sqrt{h^2-k^2}\right),
\end{equation}
where
\begin{eqnarray}
h&=&\lambda^2t_1^2+\left(1-\lambda\right)^2\left(r_1^2+r_2^2+r_3^2\right),\\
k&=&2\lambda\left(1-\lambda\right)t_1\sqrt{r_2^2+r_3^2}.
\end{eqnarray}
Evidently, $Q_B(\rho_{\tilde{P_A}})=0$ if and only if $k=0$.

Now let us evaluate $Q_{BA}(\rho)$ of the above example. In this case we turn  our attention to the simpler case $r_1=r_3=0$ and  obtain
\begin{equation}
\mathcal{Q}_{BA}(\rho)=\frac{1}{4}\big[\lambda^2t_3^2+\min\{\lambda^2t_1^2,\lambda^2t_2^2+r_2^2(1-\lambda)^2\}\big].
\label{rl}
\end{equation}
On the other hand, $Q_{AB}(\rho)$ for this particular case becomes
\begin{equation}
\mathcal{Q}_{AB}(\rho)=\frac{1}{4}\big[\lambda^2(t_2^2+t_3^2)+\min\{\lambda^2t_1^2,r_2^2(1-\lambda)^2\}\big].
\label{lr}
\end{equation}
A comparison between  Eq. \eqref{rl} and Eq. \eqref{lr} implies that these two expressions for the total quantum correlation are not equivalent. Fig. \ref{asym3D}(a) illustrates this asymmetry for the special case $\lambda=\frac{1}{2}$ and $t_1=t_2=t_3=\frac{1}{3}(2x-1)$, i.e. $\rho$ being a uniform mixture of the Werner state (with the parameter $x$ as in Eq. \eqref{wernerstate}) and the product state $\rho_p$. This figure  reveals that, for these values of parameters, we have $\mathcal{Q}_{AB}(\rho)\ge \mathcal{Q}_{BA}(\rho)$, hence $\mathcal{Q}_{\{AB\}}(\rho)=\mathcal{Q}_{AB}(\rho)$. It is easy to show that for this case, $\mu[Tr_A(\rho)]=\frac{1}{2}(1-\frac{r_2^2}{4})$, $\mu[Tr_B(\rho)]=\frac{1}{2}$ and $\mu[Tr_B(\rho_{\tilde{P}_A})]=\frac{1}{2}$, so that
\begin{equation}
\mathcal{Q}_{AB}^{\mu}(\rho)=\frac{1}{\mu[Tr_A(\rho)]}\mathcal{Q}_{A}(\rho)+\frac{1}{\mu[Tr_B(\rho_{\tilde{P}_A})]}\mathcal{Q}_{B}(\rho_{\tilde{P}_A}),
\end{equation}
and as $\mathcal{Q}_{A}(\rho_{\tilde{P}_B})=0$, we find
\begin{equation}
\mathcal{Q}_{BA}^{\mu}(\rho)=\frac{1}{\mu[Tr_B(\rho)]}\mathcal{Q}_{BA}(\rho).
\end{equation}
We have illustrated $\mathcal{Q}_{AB}^{\mu}(\rho)$ and $\mathcal{Q}_{BA}^{\mu}(\rho)$ in Fig. \ref{asym3D}(b).

\begin{figure}
\mbox{\includegraphics[scale=0.4]{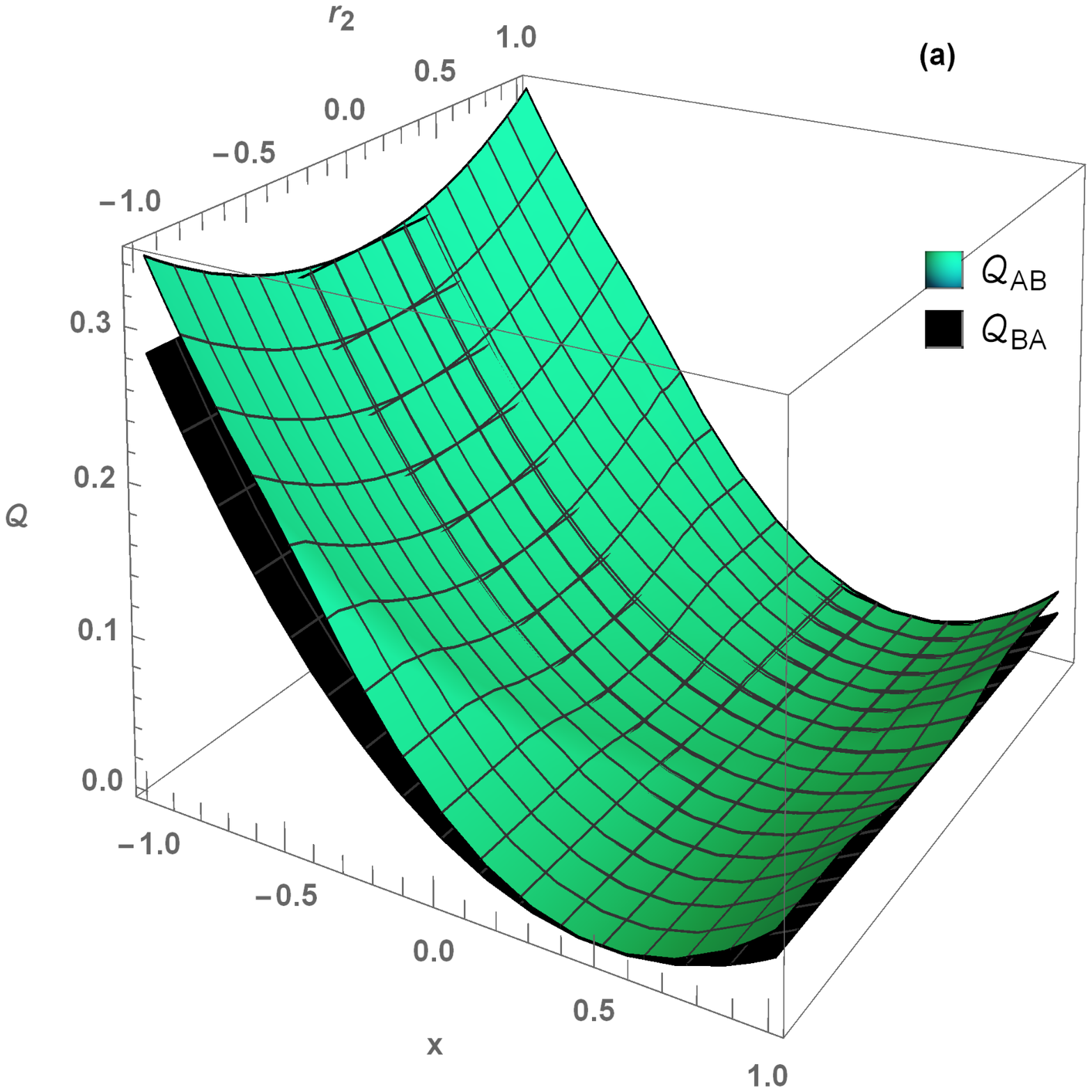}}
\mbox{\includegraphics[scale=0.4]{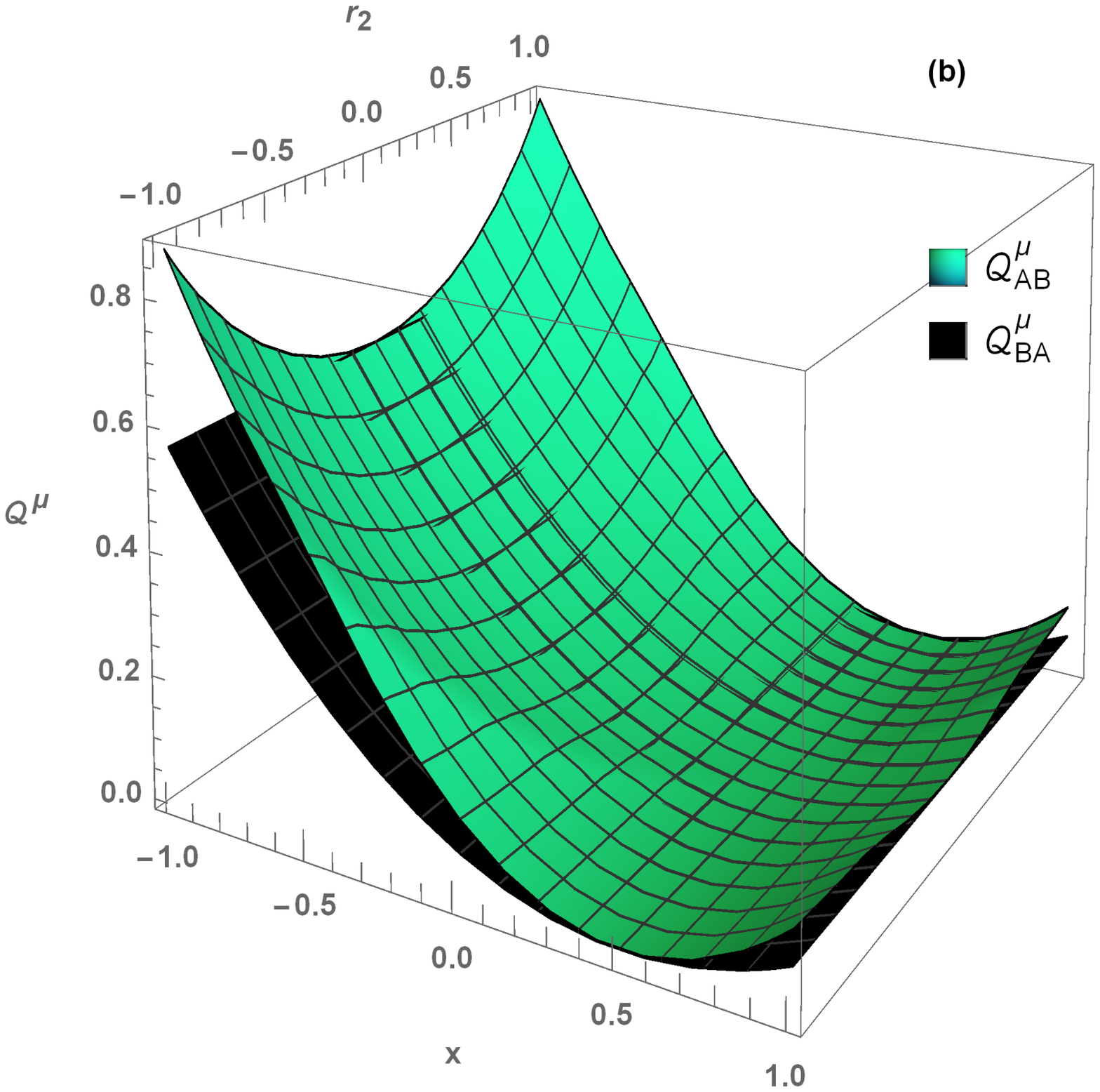}\label{asymtilde3D}}
\caption{(Color online) Total quantum correlations (a) $\mathcal{Q}_{AB}(\rho)$ and $\mathcal{Q}_{BA}(\rho)$ (b) $\mathcal{Q}_{AB}^{\mu}(\rho)$ and $\mathcal{Q}_{BA}^{\mu}(\rho)$ of the state given by  Eq. \eqref{asymstate} as  functions of $x$ and $r_2$. Here we choose $t_1=t_2=t_3=\frac{1}{3}(2x-1)$ , $r_1=r_3=0$ and $\lambda=\frac{1}{2}$. }
\label{asym3D}
\end{figure}

\section{Total quantum correlation: Multipartite systems}

\label{s3}
The generalization of the measure defined in the previous section for multipartite systems, seems to be  a little difficult. The origin of this obstacle rises from the fact that, instead of the correlation matrix $T$ appeared in the description of a bipartite state,  we now encounter with tensors  of $\rank$s larger than two. As we will show below, one can simply overcome this ambiguity by generalizing the notions of $A$- and $B$- correlation matrices.
To begin with, let us first introduce  an alternative extension for a multipartite state.  Let $\{X_{i}^{(A_s)}\}_{i=0}^{d_{A_s}^2-1}$ (for $s=1,2,\cdots,m$)
\begin{equation}
X_0^{(A_s)}=\frac{1}{\sqrt{d_{A_s}}}\mathbb{I}^{(A_s)},\quad X_{i\neq0}^{(A_s)}=\frac{1}{\sqrt{2}}\hat{\lambda}_i^{(A_s)},
\end{equation}
be  the set of Hermitian operators which constitute  an orthonormal basis for the space $\mathcal{B}(\mathcal{H}^{A_s})$, i.e.  $\mathrm{Tr}(X_{i_s}^{(A_s)}X_{j_s}^{(A_s)})=\delta_{i_sj_s}$. The above  $\hat{\lambda}_i^{(A_s)}$'s denote the set of generalized Gell-Mann $d_{A_s}\times d_{A_s}$ matrices.
Then a general $m$-partite state $\rho$ on $\mathcal{H}=\prod_{s=1}^{m}\otimes\mathcal{H}^{A_s}$ with $\dim(\mathcal{H}^{A_s})=d_{A_s}$ ($s=1,\cdots,m$)  can be written as
\begin{equation}
\rho=\sum_{i_1=0}^{d_{A_1}^2-1}\cdots\sum_{i_m=0}^{d_{A_m}^2-1}C_{i_1\cdots i_m}X_{i_1}^{(A_1)}\otimes\cdots\otimes X_{i_m}^{(A_m)},
\end{equation}
where
\begin{equation}
\label{cij}
C_{i_1\cdots i_m}=\mathrm{Tr}\left[(X_{i_1}^{(A_1)}\otimes\cdots\otimes X_{i_m}^{(A_m)})\rho\right].
\end{equation}
Comparing with Eq. \eqref{rho_1} in the bipartite case, we find  $x_i=\sqrt{\frac{d_A^2d_B}{2}}C_{i0}$, $y_j=\sqrt{\frac{d_Ad_B^2}{2}}C_{0j}$ and $t_{ij}=\frac{1}{2}d_Ad_BC_{ij}$.
Accordingly, a general state of a generic $m$-partite system can be represented by a $M$-\textit{tuple} with $M=\sum_{k=1}^m M_k$ where $M_k=\frac{m!}{k!(m-k)!}$. The entries of such $M$-\textit{tuple} are constructed by $M_k$ tensors of rank $k$ with  $k=1,2,\cdots,m$.

\subsection{Tripartite case}
To be more specific, let us turn our attention to the particular case $m=3$, i.e. tripartite systems. In this case, we can define three tensors of $rank$ 1, three tensors of $rank$ 2, and one tensor of $rank$ 3, as follows
\begin{eqnarray}
&&V_{i_1}^{(A_1)}=C_{i_100}\hspace{6.0mm},\hspace{3.0mm}V_{i_2}^{(A_2)}=C_{0i_20}\hspace{3.8mm},\hspace{6.0mm}V_{i_3}^{(A_3)}=C_{00i_3},\nonumber\\
&&T_{i_1i_2}^{(A_1A_2)}=C_{i_1i_2 0}\hspace{1mm},\hspace{1mm}T_{i_2i_3}^{(A_2A_3)}=C_{0i_2i_3}\hspace{1mm},\hspace{1mm}T_{i_1i_3}^{(A_1A_3)}=C_{i_1 0 i_3},\nonumber\\
&&T_{i_1i_2i_3}^{(A_1A_2A_3)}=C_{i_1i_2i_3}.
\end{eqnarray}
However, our task is to represent all coefficients $C_{i_1i_2i_3}$  in the matrix form. Toward this aim, we define  three collections  of matrices as $\left\{T_{[i_3]}^{(A_1A_2)}\right\}_{i_3=0}^{d_{A_3}^2-1}$, $\left\{T_{[i_2]}^{(A_1A_3)}\right\}_{i_2=0}^{d_{A_2}^2-1}$, and $\left\{T_{[i_1]}^{(A_2A_3)}\right\}_{i_1=0}^{d_{A_1}^2-1}$ with the following matrix elements
\begin{eqnarray}
T_{i_1i_2[i_3]}^{(A_1A_2)}&=&C_{i_1i_2i_3},\quad i_3=0,\cdots, d_{A_3}^2-1,\\
T_{i_1[i_2]i_3}^{(A_1A_3)}&=&C_{i_1i_2i_3},\quad i_2=0,\cdots, d_{A_2}^2-1,\\
T_{[i_1]i_2i_3}^{(A_2A_3)}&=&C_{i_1i_2i_3},\quad i_1=0,\cdots, d_{A_1}^2-1.
\end{eqnarray}
Having these constitution matrices in hand, we can now construct  $A_1$-correlation matrix $\mathcal{T}^{(A_1)}$ as
\begin{equation}
\mathcal{T}^{(A_1)}=\left\{\vec{V}^{(A_1)},  \left[T_{i_1i_2[i_3]}^{(A_1A_2)}\right], \left[T_{i_1[i_2=0]i_3}^{(A_1A_3)}\right]  \right\},
\label{a1cor}
\end{equation}
where in the second term, the index $i_3$ ranges over its  appropriate domain. Note that fixing the other index, $i_2=0$, in the third term is to prevent overcounting.  Similarly, one can define $A_2$- and $A_3$- correlation matrices as
\begin{equation}
\mathcal{T}^{(A_2)}=\left\{\vec{V}^{(A_2)},  \left[T_{i_1i_2[i_3]}^{(A_1A_2)}\right]^{\T}, \left[T_{[i_1=0]i_2i_3}^{(A_2A_3)}\right]  \right\},
\label{a2cor}
\end{equation}
\begin{equation}
\mathcal{T}^{(A_3)}=\left\{\vec{V}^{(A_3)},  \left[T_{i_1[i_2]i_3}^{(A_1A_3)}\right]^{\T}, \left[T_{[i_1=0]i_2i_3}^{(A_2A_3)}\right]^{\T}  \right\}.
\label{a3cor}
\end{equation}
 Note the transpose symbol "$\T$" in the appropriate places. Indeed, when we construct the $A_s$-correlation matrix $\mathcal{T}^{(A_s)}$ (for $s=1,2,3$), the corresponding constituted matrices are transposed when the index $i_s$ appears as the column index.
Using these definitions, we end up with the following observation for quantum correlation due to party  $A_s$ ($s=1,2,3$).
\begin{observation}\label{Obs}
The tripartite state $\rho$ is classical with respect to the part $A_s$ ($s=1,2,3$), if and only if there exists a $(d_{A_s}-1)$-dimensional projection operator $P_{A_s}$ acting  on $\mathbb{R}^{d_{A_s}^2-1}$ such that
\begin{equation}
P_{A_s}\mathcal{T}^{(A_s)}=\mathcal{T}^{(A_s)},
\end{equation}
where $\mathcal{T}^{(A_s)}$ is defined as above.
\end{observation}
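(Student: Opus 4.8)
The plan is to reduce Observation~\ref{Obs} to the bipartite necessary-and-sufficient condition already recorded in \eqref{nsc1}. First recall what the hypothesis means: in exact analogy with \eqref{cqs}, ``$\rho$ is classical with respect to $A_s$'' says that there is an orthonormal basis $\{\ket{k}\}$ of $\mathcal{H}^{(A_s)}$ and density operators $\rho_k$ on the product of the other two factors with $\rho=\sum_k p_k\,\project{k}^{(A_s)}\otimes\rho_k$; equivalently, some rank-one von~Neumann measurement on $A_s$ leaves $\rho$ invariant. I would therefore regard $\rho$ as a \emph{bipartite} state on $\mathcal{H}^{(A_s)}\otimes\mathcal{H}^{(\bar{A}_s)}$, where $\bar{A}_s$ is the tensor product of the remaining two subsystems, and apply the bipartite criterion across this cut.

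For that I would check that the bipartite condition indeed applies with the operator basis natural to our setting: the products $\{X_{i_t}^{(A_t)}\otimes X_{i_u}^{(A_u)}\}$ with $\{t,u\}=\{1,2,3\}\setminus\{s\}$ form an orthonormal Hermitian basis of $\mathcal{B}(\mathcal{H}^{(\bar{A}_s)})$ whose only non-traceless element is proportional to $\Id^{(\bar{A}_s)}$, and the derivation behind \eqref{nsc1} uses nothing beyond this --- in particular it does not matter that these products are not the Gell-Mann generators of $SU(d_{A_t}d_{A_u})$, since passing to any other traceless orthonormal Hermitian basis on the unmeasured side is an orthogonal change of basis acting on the right of the correlation matrix, which commutes with left multiplication by $P_{A_s}$ and hence leaves the condition untouched. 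So $\rho$ is classical with respect to $A_s$ if and only if there is a $(d_{A_s}-1)$-dimensional projection $P_{A_s}$ on $\mathbb{R}^{d_{A_s}^2-1}$ that simultaneously fixes the $A_s$-coherence vector $\vec x^{(A_s)}$ and the $A_s\mid\bar{A}_s$ correlation matrix $T^{(A_s\mid\bar{A}_s)}$, both read off from the coefficients \eqref{cij}.

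The last step is to match this pair of conditions with $P_{A_s}\mathcal{T}^{(A_s)}=\mathcal{T}^{(A_s)}$. Up to a fixed positive constant, $\vec x^{(A_s)}$ is the vector whose $i_s$-th entry ($i_s=1,\dots,d_{A_s}^2-1$) is $C_{i_1i_2i_3}$ with the two complementary labels set to $0$, that is, $\vec V^{(A_s)}$; and, again up to positive constants, the columns of $T^{(A_s\mid\bar{A}_s)}$ are exactly the vectors $i_s\mapsto C_{i_1i_2i_3}$ as $(i_t,i_u)$ ranges over all pairs $\neq(0,0)$. Conversely, by construction each column of $\mathcal{T}^{(A_s)}$ in \eqref{a1cor}--\eqref{a3cor} is one of these same vectors: the splitting into the block $[T^{(A_sA_t)}]$, the block $[T^{(A_sA_u)}]$ with one complementary label frozen to $0$, and the rank-three slices --- together with the transposes that merely keep $i_s$ as the row index --- is just a bookkeeping device enumerating the pairs $(i_t,i_u)$, and any residual duplication of columns is harmless. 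Since a projection fixes a matrix precisely when it fixes every column, and rescaling or duplicating columns changes nothing, $P_{A_s}\mathcal{T}^{(A_s)}=\mathcal{T}^{(A_s)}$ is equivalent to $P_{A_s}$ fixing both $\vec x^{(A_s)}$ and $T^{(A_s\mid\bar{A}_s)}$, which by the previous paragraph is equivalent to $\rho$ being classical with respect to $A_s$. The argument is identical for $s=1,2,3$.

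\emph{The main obstacle.} There is no genuinely hard step; the care is needed in two places. One is justifying that \eqref{nsc1} may be invoked with the product basis $\{X_{i_t}^{(A_t)}\otimes X_{i_u}^{(A_u)}\}$ on the unmeasured side rather than Gell-Mann generators, which the orthogonal-change-of-basis remark handles. The other is the combinatorics of \eqref{a1cor}--\eqref{a3cor}: verifying that its blocks, with the ``freeze a label to $0$'' convention and the transposes, really do list every coefficient $C_{i_1i_2i_3}$ with $(i_t,i_u)\neq(0,0)$. Because only the column space of $\mathcal{T}^{(A_s)}$ enters the classicality test, this enumeration need not even be non-redundant, which is what keeps the reduction clean.
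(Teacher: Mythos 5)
Your proposal is correct and follows essentially the same route as the paper: regard $\rho$ as a bipartite state across the cut $A_s\mid(A_tA_u)$, identify $\mathcal{T}^{(A_s)}$ (up to positive column rescalings and column bookkeeping) with the bipartite $A$-correlation matrix of that cut, and invoke the criterion \eqref{nsc1}. Your extra remark justifying the use of the product basis $\{X_{i_t}^{(A_t)}\otimes X_{i_u}^{(A_u)}\}$ on the unmeasured side via an orthogonal change of basis acting on the right is a point the paper leaves implicit, and is a welcome addition.
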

\begin{proof} We prove the observation for the first subsystem, while generalizing to the other subsystems is straightforward. Let $\rho$ be a tripartite state acting on $\mathcal{H}=\mathcal{H}^{A_1}\otimes\mathcal{H}^{A_2}\otimes\mathcal{H}^{A_3}$, and $C_{i_1i_2i_3}$ be expansion coefficients of $\rho$ in terms of the orthonormal basis $X_{i_1}^{(A_1)}\otimes X_{i_2}^{(A_2)}\otimes X_{i_3}^{(A_3)}$. However, one can partition $\mathcal{H}$ as $\mathcal{H}=\mathcal{H}^{A}\otimes\mathcal{H}^{B}$ with $A=A_1$ and $B=(A_2A_3)$ as the new subsystems.
Applying this division, the coherence vector $\vec{x}$ and the correlation matrix $T$ of $\rho$, as a bipartite state, are given respectively by
$x_{i_1}=\sqrt{\frac{d_A^2d_B}{2}}C_{i_100}$, and $t_{i_1(i_2i_3)}=\frac{1}{2}d_Ad_BC_{i_1i_2i_3}$ with  $d_A=d_{A_1}$, and $d_B=d_{A_2}d_{A_3}$. Here we have used the index string $(i_2i_3)$ as a collective index for the columns of $T$.
This means that the bipartite correlation matrix $T=(t_{i_1(i_2i_3)})$ is constructed as $T=\left[\left[T_{i_1i_2[i_3]}^{(A_1A_2)}\right], \left[T_{i_1[i_2=0]i_3}^{(A_1A_3)}\right]\right]$ where in the first term $i_3$ ranges over its appropriate domain, while in the second one, $i_2$ is fixed to zero for the purpose of preventing overcounting.
It follows therefore that $\rho$, as a bipartite state,  is a classical state with respect to part $A$ if and only if the associated bipartite $A$-correlation matrix $\mathcal{T}^{A}=\sqrt{\frac{2}{d_A^2d_B}}\left(\begin{array}{cc}
\vec{x}\quad & \quad \sqrt{\frac{2}{d_B}}T\end{array}\right)$ satisfies condition \eqref{nsc1}. One can find that the $A_1$-correlation matrix of the tripartite state $\rho$ is, indeed, nothing but the the associated bipartite $A$-correlation matrix, so the proof is complete.
\end{proof}

This condition allows us to define the following measure for quantum correlation of part $A_s$, which is a multipartite generalization of the measure defined in Ref. \cite{Akhtarshenas&etal:2015}
\begin{equation}
\mathcal{Q}_{A_s}(\rho)=\min_{P_{A_s}}\parallel\mathcal{T}^{(A_s)}-P_{A_s}\mathcal{T}^{(A_s)}\parallel^2_2.
\label{GGQDs}
\end{equation}
Now let $\tilde{P}_{A_1}$ be the projection operator that minimizes the above distance for $s=1$, i.e.
\begin{equation}
\mathcal{Q}_{A_1}(\rho)=\parallel\mathcal{T}^{(A_1)}-\tilde{P}_{A_1}\mathcal{T}^{(A_1)}\parallel^2_2.
\end{equation}
After performing this projection operator, the $A_s$-correlation matrices of $\rho$, i.e.  $\mathcal{T}^{(A_s)}$ (for $s=1,2,3$), transform  to
\begin{eqnarray}
\mathcal{T}_{\tilde{P}_{A_1}}^{(A_1)}&=&\left\{\tilde{P}_{A_1}\vec{V}^{(A_1)}, \left[T_{i_1i_2[i_3]}^{(A_1A_2)}\right]_{\tilde{P}_{A_1}}, \left[T_{i_1[i_2=0]i_3}^{(A_1A_3)}\right]_{\tilde{P}_{A_1}}  \right\},\nonumber\\
\mathcal{T}_{\tilde{P}_{A_1}}^{(A_2)}&=&\left\{\vec{V}^{(A_2)}, \left[T_{i_1i_2[i_3]}^{(A_1A_2)}\right]_{\tilde{P}_{A_1}}^{\T}, \left[T_{[i_1=0]i_2i_3}^{(A_2A_3)}\right]_{\tilde{P}_{A_1}}  \right\},\nonumber\\
\mathcal{T}_{\tilde{P}_{A_1}}^{(A_3)}&=&\left\{\vec{V}^{(A_3)}, \left[T_{i_1[i_2]i_3}^{(A_1A_3)}\right]_{\tilde{P}_{A_1}}^{\T}, \left[T_{[i_1=0]i_2i_3}^{(A_2A_3)}\right]_{\tilde{P}_{A_1}}^{\T}  \right\}.
\end{eqnarray}
These matrices can be regarded as the $A_s$-correlation matrices of the state $\rho_{\tilde{P}_{A_1}}$ wherein
\begin{eqnarray}
\left[T_{i_1i_2[i_3]}^{(A_1A_2)}\right]_{\tilde{P}_{A_1}}=
\sum_{i^{\prime}_1=1}^{d_{A_1}^2-1}\left(\tilde{P}_{A_1}\right)_{i_1i_1^{\prime}}T_{i_1^{\prime}i_2[i_3]}^{(A_1A_2)},\\
\left[T_{i_1[i_2]i_3}^{(A_1A_3)}\right]_{\tilde{P}_{A_1}}=
\sum_{i^{\prime}_1=1}^{d_{A_1}^2-1}\left(\tilde{P}_{A_1}\right)_{i_1i_1^{\prime}}T_{i_1^{\prime}[i_2]i_3}^{(A_1A_3)},\\
\left[T_{[i_1]i_2i_3}^{(A_2A_3)}\right]_{\tilde{P}_{A_1}}=
\sum_{i^{\prime}_1=1}^{d_{A_1}^2-1}\left(\tilde{P}_{A_1}\right)_{i_1i_1^{\prime}}T_{[i_1^{\prime}]i_2i_3}^{(A_2A_3)},
\end{eqnarray}
for $1\leq i_1\leq d_{A_1}^2-1$, $0\leq i_2\leq d_{A_2}^2-1$, and $0\leq i_3\leq d_{A_3}^2-1$.
However, for $i_1=0$ the trivial relations will be preserved, i.e. we have $\left[T_{0i_2[i_3]}^{(A_1A_2)}\right]_{\tilde{P}_{A_1}}=\left[T_{0i_2[i_3]}^{(A_1A_2)}\right]$, $\left[T_{0[i_2]i_3}^{(A_1A_3)}\right]_{\tilde{P}_{A_1}}=\left[T_{0[i_2]i_3}^{(A_1A_3)}\right]$, and $\left[T_{[0]i_2i_3}^{(A_2A_3)}\right]_{\tilde{P}_{A_1}}=\left[T_{[0]i_2i_3}^{(A_2A_3)}\right]$.

Clearly, $\rho_{\tilde{P}_{A_1}}$ has zero $A_1$-quantum correlation, but it may still have nonzero $A_2$- and $A_3$- quantum correlations. Now suppose $\tilde{P}_{A_2}$ be the  projection operator that gives the $A_2$-quantum correlation of $\rho_{\tilde{P}_{A_1}}$, i.e.
\begin{equation}
\mathcal{Q}_{A_2}(\rho_{\tilde{P}_{A_1}})=\parallel\mathcal{T}_{\tilde{P}_{A_1}}^{(A_2)}-\tilde{P}_{A_2}\mathcal{T}_{\tilde{P}_{A_1}}^{(A_2)}\parallel^2_2.
\end{equation}
Performing $\tilde{P}_{A_2}$ changes the $A_3$-correlation matrix $\mathcal{T}_{\tilde{P}_{A_1}}^{(A_3)}$  as follows
\begin{equation}
\mathcal{T}_{\tilde{P}_{A_2}\tilde{P}_{A_1}}^{(A_3)}=\left\{\vec{V}^{(A_3)}, \left[T_{i_1[i_2]i_3}^{(A_1A_3)}\right]_{\tilde{P}_{A_2}\tilde{P}_{A_1}}^{\T}, \left[T_{[i_1=0]i_2i_3}^{(A_2A_3)}\right]_{\tilde{P}_{A_2}\tilde{P}_{A_1}}^{\T}  \right\},
\end{equation}
where
\begin{eqnarray}
\left[T_{i_1[i_2]i_3}^{(A_1A_3)}\right]_{\tilde{P}_{A_2}\tilde{P}_{A_1}}=\sum_{i_2^{\prime}=1}^{d_{A_2}^2-1}\left(\tilde{P}_{A_2}\right)_{i_2i_2^{\prime}}\left[T_{i_1[i_2^{\prime}]i_3}^{(A_1A_3)}\right]_{\tilde{P}_{A_1}}\\
\left[T_{[i_1]i_2i_3}^{(A_2A_3)}\right]_{\tilde{P}_{A_2}\tilde{P}_{A_1}}=\sum_{i_2^{\prime}=1}^{d_{A_2}^2-1}\left(\tilde{P}_{A_2}\right)_{i_2i_2^{\prime}}\left[T_{[i_1]i_2^{\prime}i_3}^{(A_2A_3)}\right]_{\tilde{P}_{A_1}}
\end{eqnarray}
for $0\leq i_1\leq d_{A_1}^2-1$, $1\leq i_2\leq d_{A_2}^2-1$ and $0\leq i_3\leq d_{A_3}^2-1$.
Again, note that for $i_2=0$ the trivial relations will be preserved, i.e. $\left[T_{i_1[0]i_3}^{(A_1A_3)}\right]_{\tilde{P}_{A_2}\tilde{P}_{A_1}}=\left[T_{i_1[0]i_3}^{(A_1A_3)}\right]_{\tilde{P}_{A_1}}$ and $\left[T_{[i_1]0i_3}^{(A_2A_3)}\right]_{\tilde{P}_{A_2}\tilde{P}_{A_1}}=\left[T_{[i_1]0i_3}^{(A_2A_3)}\right]_{\tilde{P}_{A_1}}$.

Finally, we can define the $A_3$-quantum correlation of $\rho_{\tilde{P}_{A_2}\tilde{P}_{A_1}}$ as follows
\begin{equation}
\mathcal{Q}_{A_3}(\rho_{\tilde{P}_{A_2}\tilde{P}_{A_1}})=\min_{P_{A_3}}\parallel\mathcal{T}_{\tilde{P}_{A_2}\tilde{P}_{A_1}}^{(A_3)}
-P_{A_3}\mathcal{T}_{\tilde{P}_{A_2}\tilde{P}_{A_1}}^{(A_3)}\parallel^2_2.
\end{equation}
We define therefore the total $A_1A_2A_3$-quantum correlation of the tripartite state $\rho$ as
\begin{equation}
\mathcal{Q}_{A_1A_2A_3}(\rho)=\mathcal{Q}_{A_1}(\rho)+\mathcal{Q}_{A_2}(\rho_{\tilde{P}_{A_1}})+\mathcal{Q}_{A_3}(\rho_{\tilde{P}_{A_2}\tilde{P}_{A_1}}),
\label{tq3}
\end{equation}
which can be expressed as
\begin{eqnarray}
\mathcal{Q}_{A_1A_2A_3}(\rho)&=&\sum_{k=d_{A_1}}^{d_{A_1}^2-1}(\tau^{(A_1)\downarrow})_k
+\sum_{k=d_{A_2}}^{d_{A_2}^2-1}(\tau^{(A_2)\downarrow}_{\tilde{P}_1})_{k}\nonumber \\
&&+\sum_{k=d_{A_3}}^{d_{A_3}^2-1}(\tau^{(A_3)\downarrow}_{\tilde{P}_2\tilde{P}_1})_{k}
\end{eqnarray}
where $\left\{(\tau^{(A_1)\downarrow})_{k}\right\}$, $\left\{(\tau^{(A_2)\downarrow}_{\tilde{P}_1})_{k}\right\}$ and $\left\{(\tau^{(A_3)\downarrow}_{\tilde{P}_2\tilde{P}_1})_{k}\right\}$ are respective eigenvalues of $(\mathcal{T}^{(A_1)})({\mathcal{T}^{(A_1)}})^{\T}$, $(\mathcal{T}_{\tilde{P_1}}^{(A_2)})({\mathcal{T}_{\tilde{P_1}}^{(A_2)}})^{\T}$, and $(\mathcal{T}_{\tilde{P_2}\tilde{P_1}}^{(A_3)})({\mathcal{T}_{\tilde{P_2}\tilde{P_1}}^{(A_3)}})^{\T}$ in non-increasing order. Finally, we define the total quantum correlation of the tripartite state $\rho$ as
\begin{equation}
\mathcal{Q}_{\{A_1A_2A_3\}}=\max_{\mathcal{P}}\{\mathcal{Q}_{A_{i_1}A_{i_2}A_{i_3}}\}
\end{equation}
where maximum is taken over  all the $3!$ permutations of the three subsystems $\{A_1A_2A_3\}$.

 In a similar manner we define
\begin{equation}
\mathcal{Q}^\mu_{\{A_1A_2A_3\}}=\max_{\mathcal{P}}\{\mathcal{Q}^{\mu}_{A_{i_1}A_{i_2}A_{i_3}}\}
\label{QT3P}
\end{equation}
where, for example, $\mathcal{Q}^\mu_{A_1A_2A_3}(\rho)$ is given by
\begin{eqnarray}
\mathcal{Q}^\mu_{A_1A_2A_3}(\rho)&=&\mathcal{Q}^\mu_{A_1}(\rho)+\mathcal{Q}^\mu_{A_2}(\rho_{\tilde{P}_{A_1}})+\mathcal{Q}^\mu_{A_3}(\rho_{\tilde{P}_{A_1A_2}})\nonumber\\
&=&\frac{1}{\mu[\Tr_{A_1}(\rho)]}\mathcal{Q}_{A_1}(\rho)+\frac{1}{\mu[\Tr_{A_2}(\rho_{\tilde{P}_{A_1}})]}\mathcal{Q}_{A_1}(\rho_{\tilde{P}_{A_1}})\nonumber\\
&&+\frac{1}{\mu[\Tr_{A_3}(\rho_{\tilde{P}_{A_2}\tilde{P}_{A_1}})]}\mathcal{Q}_{A_3}(\rho_{\tilde{P}_{A_2}\tilde{P}_{A_1}}),
\label{QTT3}
\end{eqnarray}
Let consider some illustrative examples.

{\it Werner-GHZ state}
A generic  Werner-GHZ state is defined as
\begin{equation}
\rho=\frac{1-\lambda}{8}\mathbb{I}^{\otimes 3}+\lambda|GHZ\rangle\langle GHZ|;\quad0\leq \lambda\leq1,
\end{equation}
with $|GHZ\rangle=\frac{1}{\sqrt{2}}\left(|000\rangle+|111\rangle\right)$. Following the above procedure, one can easily show that $\mathcal{Q}_{A_1}(\rho)=\frac{1}{2}\lambda^2$, $\mathcal{Q}_{A_2}(\rho_{\tilde{P}_{A_1}})=\frac{1}{4}\lambda^2$, and $\mathcal{Q}_{A_3}(\rho_{\tilde{P}_{A_2}\tilde{P}_{A_1}})=0$.  As the state is symmetric with respect to its three subsystems, the total quantum correlation of this state is
\begin{equation}
\mathcal{Q}_{\{A_1A_2A_3\}}=\mathcal{Q}_{A_1A_2A_3}=\frac{3}{4}\lambda^2.
\end{equation}
Using  the fact that $\mu(\Tr_A[\rho])=\frac{1}{4}(1+\lambda^2)$ and $\mu(\Tr_B[\rho_{\tilde{P}_A}])=\frac{1}{4}$ we  find
\begin{equation}
\mathcal{Q}_{\{A_1A_2A_3\}}^{\mu}=\mathcal{Q}_{A_1A_2A_3}^{\mu}=\frac{\lambda^2(3+\lambda^2)}{1+\lambda^2}.
\end{equation}
The symmetric quantum discord  and  the geometric global quantum discord  of this state are also calculated in \cite{Rulli&Sarandy:2011} and \cite{Xu:2012} as
\begin{eqnarray}
\mathcal{D}_{s}(\rho)&=&-\frac{1}{4}(1+3\lambda)\log_2(1+3\lambda)+\frac{1}{8}(1-\lambda)\log_2(1-\lambda)\nonumber\\
&&+\frac{1}{8}(1+7\lambda)\log_2(1+7\lambda),
\end{eqnarray}
and
\begin{equation}
\mathcal{D}_{GG}(\rho)=\frac{1}{2}\lambda^2,
\end{equation}
respectively. For a comparison, we have plotted the results in Fig. \ref{WernerGHZ}.
\begin{figure}
\includegraphics[scale=0.8]{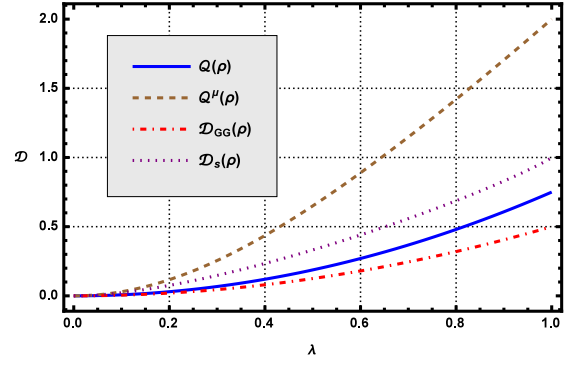}
\caption{(Color online) Geometric global quantum discord, generalized symmetric quantum discord, and our computable measure of total quantum correlation of the Werner-GHZ state as functions of $\lambda$.}
\label{WernerGHZ}
\end{figure}

{\it W-GHZ state---} A generic W-GHZ state has the following definition \cite{Hassan&Joag:2012}
\begin{equation}
\rho_2=\lambda|W\rangle\langle W|+(1-\lambda)|GHZ\rangle\langle GHZ|,
\end{equation}
where $|W\rangle=\frac{1}{\sqrt{3}}\left(|100\rangle+|010\rangle+|001\rangle\right)$,  and $\lambda$ ranges from 0 to 1. Although the analytical calculations can be made for this example, because of their complicated form we only illustrate the results in Fig. \ref{W-GHZ}. Because of the behavior of the eigenvalues of $(\mathcal{T}^{(A_1)})({\mathcal{T}^{(A_1)}})^{\T}$, the $A_1$-quantum correlation  for this state is a two-conditional function.  As a result, we will have two different projectors to gain the quantum correlation, making  the total quantum correlation discontinuous.  Due to the change in the behavior of the eigenvalues of $A_1$-correlation matrix at $\lambda_0=\frac{3}{4}$ \cite{Hassan&Joag:2012}, the sudden change \cite{Maziero&etal:2009,Mazzola&etal:2010,Wu&etal:2011,Joao&etal:2013,Jia&etal:2013} in the geometric discord occurs, leading therefore to the discontinuity in the total quantum correlation. More precisely, for $\lambda<\lambda_0$ and $\lambda>\lambda_0$, the projection operator $\tilde{P}_{A_1}$ leads to two different $\rho_{\tilde{P}_{A_1}}$ with different quantum correlations.
\begin{figure}
\includegraphics[scale=0.8]{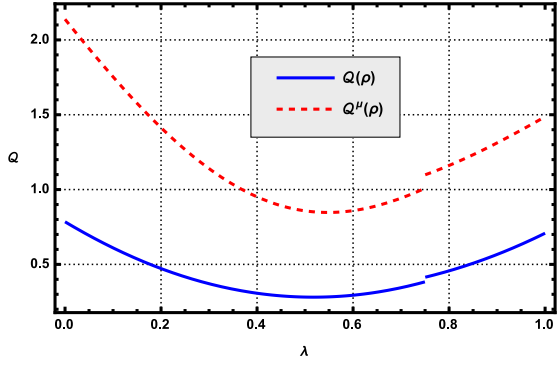}
\caption{(Color online) Total quantum correlation of the W-GHZ state as a function of $\lambda$. The discontinuity at $\lambda=\frac{3}{4}$ is because of the existence of two different $\rho_{\tilde{P}_{A_1}}$ with different quantum correlations.}
\label{W-GHZ}
\end{figure}

\subsection{Multipartite case}
A generalization of the above measure to the multipartite case is straightforward.
To do this,  we use the coefficients $C_{i_1i_2\cdots i_m}$ and construct $m(m-1)/2$ collections of matrices denoting by $T^{(A_1A_2)}_{[I_{1,2}]}$, $T^{(A_1A_3)}_{[I_{1,3}]}$, $\cdots$, $T^{(A_{m-1}A_m)}_{[I_{{m-1},m}]}$. Here,  $[I_{k,l}]$ (for $1\le k < l \le m$) stands for a collective of ($m-2$) indices, including all but the $A_k$ and $A_l$ subsystems, i.e. $I_{k,l}=i_1\cdots i_{k-1}i_{k+1}\cdots i_{l-1}i_{l+1}\cdots i_m$,  such that each index $i_n$ (for $n=1,\cdots,k-1,k+1,\cdots,l-1,l+1,\cdots,m$) ranges from $0$ to $d_{A_n}^2-1$. A general such defined matrix has the following matrix elements
\begin{equation}
T_{i_ki_l[I_{k,l}]}^{(A_kA_l)}=C_{i_1\cdots i_{k}\cdots i_{l}\cdots i_m}.
\end{equation}
We now define the $A_s$-correlation matrix $\mathcal{T}^{(A_s)}$ as
\begin{equation}
\mathcal{T}^{(A_s)}=\Big\{\vec{V}^{(A_s)}, \left[T_{i_{\tilde{s}}i_s[I_{\tilde{s},s}]}^{(A_{\tilde{s}}A_s)}\right]_{{\tilde{s}}<s}^\T , \left[T_{i_si_{\tilde{s}}[I_{s,\tilde{s}}]}^{(A_sA_{\tilde{s}})}\right]_{{\tilde{s}}>s}\Big\},
\label{tas}
\end{equation}
where $i_s$ stands for the index of the subsystem $A_s$, and $i_{\tilde{s}}$ denotes the index of the other $m-1$ subsystems $A_{\tilde{s}}$. Moreover, we should be careful when $\tilde{s}$ ranges over the $m-1$ subsystems.  In fact as it is stressed explicitly in the last two terms of the above equation, $\tilde{s}$ takes values $1,\cdots, s-1$ and  $s+1,\cdots, m$  for the second and third terms of the above equation, respectively; i.e.  terms  with and without transposition.  Finally, in constructing $\mathcal{T}^{(A_s)}$, the collective indices $I_{\tilde{s},s}$ and $I_{s,\tilde{s}}$ take  their values in such a way that   $i_{n}=0$ for $n<\tilde{s}$,  but it ranges over its appropriate values, i.e. $i_n=0,\cdots,d_{A_n}^2-1$,  for $n>\tilde{s}$.

 We are now in the position to present a necessary and sufficient condition for the classicality of a multipartite state $\rho$ with respect to a given subsystem $A_s$.
\begin{observation}
The multipartite state $\rho$ is classical with respect to part $A_s$ ($s=1,\cdots,m$) if and only if there exists a $(d_{A_s}-1)$-dimensional projection operator $P_{A_s}$ acting  on $\mathbb{R}^{d_{A_s}^2-1}$ such that
\begin{equation}
P_{A_s}\mathcal{T}^{(A_s)}=\mathcal{T}^{(A_s)},
\end{equation}
where $\mathcal{T}^{(A_s)}$ is defined by Eq. \eqref{tas}.
\end{observation}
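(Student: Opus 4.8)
The plan is to reduce the multipartite classicality condition to the bipartite one already established in Eq.~\eqref{nsc1}, exactly as was done for the tripartite case in the proof of Observation~\ref{Obs}. First I would fix the distinguished subsystem $A_s$ and regard $\rho$ as a bipartite state on $\mathcal{H}^A\otimes\mathcal{H}^B$ with $A=A_s$ and $B=A_1\cdots A_{s-1}A_{s+1}\cdots A_m$, so that $d_A=d_{A_s}$ and $d_B=\prod_{\tilde s\neq s}d_{A_{\tilde s}}$. The tensor products $\bigotimes_{\tilde s\neq s}X_{i_{\tilde s}}^{(A_{\tilde s})}$ form an orthonormal Hermitian operator basis of $\mathcal{B}(\mathcal{H}^B)$, so the coefficients $C_{i_1\cdots i_m}$ are, up to the fixed normalization factors recorded just after Eq.~\eqref{cij}, precisely the components of the bipartite coherence vector $\vec{x}$ of $A_s$ (those entries with all $i_{\tilde s}=0$) and of the bipartite correlation matrix $T=(t_{i_s(I_s)})$, where the collective index $I_s=i_1\cdots i_{s-1}i_{s+1}\cdots i_m$ labels the columns. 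By the bipartite criterion, $\rho$ is then classical with respect to $A_s$ if and only if there is a $(d_{A_s}-1)$-dimensional projection $P_{A_s}$ on $\mathbb{R}^{d_{A_s}^2-1}$ with $P_{A_s}\vec{x}=\vec{x}$ and $P_{A_s}T=T$, equivalently $P_{A_s}\mathcal{T}^A=\mathcal{T}^A$ for the bipartite $A$-correlation matrix of this grouping.

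The core of the argument is then to check that the matrix $\mathcal{T}^{(A_s)}$ assembled in Eq.~\eqref{tas} from the collections $\{T^{(A_kA_l)}_{[I_{k,l}]}\}$ is obtained from this bipartite $\mathcal{T}^A$ by a permutation of its columns together with a rescaling of the column blocks. Concretely, I would verify that the padding convention --- in the block attached to the pair $(A_{\tilde s},A_s)$ one sets $i_n=0$ for every $n<\tilde s$ while letting $i_n$ run over its full range for $n>\tilde s$ --- realizes a bijection between the coefficients $C_{i_1\cdots i_m}$ and the entries of $\mathcal{T}^{(A_s)}$: any string $(i_1,\dots,i_m)$ with $i_s$ fixed belongs to exactly one block, namely the one indexed by the \emph{smallest} $\tilde s\neq s$ with $i_{\tilde s}\neq0$ (and to the $\vec V^{(A_s)}$ block if all those indices vanish). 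Hence every column of the bipartite $T$ appears once and only once among the columns of $\mathcal{T}^{(A_s)}$, and the transposition symbols merely record whether, in the chosen layout, $i_s$ plays the role of a row index or a column index of the constituent matrix.

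Once this identification is in place the conclusion is immediate: the projector $P_{A_s}$ acts on the $A_s$-index, i.e.\ on the rows of both $\mathcal{T}^{(A_s)}$ and $\mathcal{T}^A$, whereas the passage from $\mathcal{T}^A$ to $\mathcal{T}^{(A_s)}$ only permutes and rescales columns. Since left multiplication by $P_{A_s}$ commutes with any such column operation, $P_{A_s}\mathcal{T}^{(A_s)}=\mathcal{T}^{(A_s)}$ holds if and only if $P_{A_s}\mathcal{T}^A=\mathcal{T}^A$, which by the first paragraph is equivalent to classicality of $\rho$ with respect to $A_s$. I expect the only delicate point to be the combinatorial bookkeeping of the padding/overcounting convention --- making sure no coefficient is omitted or counted twice as $\tilde s$ ranges over the $m-1$ remaining labels --- while the reduction to the bipartite case and the invariance under column operations are routine.
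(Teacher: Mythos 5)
Your proposal is correct and follows essentially the same route the paper takes: the paper proves only the tripartite case (Observation \ref{Obs}) by regrouping $\rho$ as a bipartite state with $A=A_s$ and $B$ the remaining parties and invoking condition \eqref{nsc1}, leaving the general $m$-partite statement as the evident extension, which is exactly what you carry out. Your explicit verification that the padding convention ($i_n=0$ for $n<\tilde s$, free for $n>\tilde s$) makes each coefficient string land in precisely one block --- the one labelled by the smallest $\tilde s\neq s$ with $i_{\tilde s}\neq 0$ --- is the one bookkeeping point the paper glosses over, and you have it right.
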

This condition allows us to define the following measure for quantum correlation of part $A_s$, which is a multipartite generalization of the measure introduced in Ref. \cite{Akhtarshenas&etal:2015}.
\begin{equation}
\mathcal{Q}_{A_s}(\rho)=\min_{P_{A_s}}\parallel\mathcal{T}^{(A_s)}-P_{A_s}\mathcal{T}^{(A_s)}\parallel^2_2.
\label{qas}
\end{equation}
Now let $\rho$ be a multipartite state with  associated $A_s$-correlation matrices $\mathcal{T}^{(A_s)}$ defined by Eq. \eqref{tas}. Let also $P_{A_s}$ (for $s=1,\cdots,m$) represent a $({d_{A_s}-1})$-dimensional projection operator acting on the $({d_{A_s}^2-1})$-dimensional  parameter space of the subsystem $A_s$, i.e. on $\mathbb{R}^{d_{A_s}^2-1}$.
Now  in order to capture all quantum correlations of the multipartite state $\rho$, we proceed as follows. Fist,  let $\tilde{P}_{A_1}$ be the projection operator that minimizes the above distance for $s=1$, i.e.
\begin{equation}
\mathcal{Q}_{A_1}(\rho)=\parallel\mathcal{T}^{(A_1)}-\tilde{P}_{A_1}\mathcal{T}^{(A_1)}\parallel^2_2.
\end{equation}
After performing this projection operator, the $A_s$-correlation matrices transform to $\mathcal{T}_{\tilde{P}_{A_1}}^{(A_s)}$ (for $s=1,\cdots,m$), corresponding to the state $\rho_{\tilde{P}_{A_1}}$. Next, we use $\mathcal{T}_{\tilde{P}_{A_1}}^{(A_2)}$, and calculate $\mathcal{Q}_{A_2}(\rho_{\tilde{P}_{A_1}})$ as
\begin{equation}
\mathcal{Q}_{A_2}(\rho_{\tilde{P}_{A_1}})=\parallel\mathcal{T}_{\tilde{P}_{A_1}}^{(A_2)}-\tilde{P}_{A_2}\mathcal{T}_{\tilde{P}_{A_1}}^{(A_2)}\parallel^2_2,
\end{equation}
where we have assumed that $\tilde{P}_{A_2}$ is the optimized projection operator. After performing $\tilde{P}_{A_2}$, the $A_s$-correlation matrices $\mathcal{T}_{\tilde{P}_{A_1}}^{(A_s)}$ transform to $\mathcal{T}_{\tilde{P}_{A_2}\tilde{P}_{A_1}}^{(A_s)}$ (for $s=1,\cdots,m$), corresponding to the state $\rho_{\tilde{P}_{A_2}\tilde{P}_{A_1}}$. This gives us
\begin{equation}
\mathcal{Q}_{A_3}(\rho_{\tilde{P}_{A_2}\tilde{P}_{A_1}})=\parallel\mathcal{T}_{\tilde{P}_{A_2}\tilde{P}_{A_1}}^{(A_3)}-
\tilde{P}_{A_3}\mathcal{T}_{\tilde{P}_{A_2}\tilde{P}_{A_1}}^{(A_3)}\parallel^2_2.
\end{equation}
Continuing this procedure by applying the optimized projection operators $\tilde{P}_{A_3}$, $\tilde{P}_{A_4}$, $\cdots$, $\tilde{P}_{A_m}$ successively, we will obtain $A_s$-correlation matrices as $\mathcal{T}_{\tilde{P}_{A_3}\tilde{P}_{A_2}\tilde{P}_{A_1}}^{(A_s)}$,
$\mathcal{T}_{\tilde{P}_{A_4}\tilde{P}_{A_3}\tilde{P}_{A_2}\tilde{P}_{A_1}}^{(A_s)}$, $\cdots$, $\mathcal{T}_{\tilde{P}_{A_m}\tilde{P}_{A_3}\cdots\tilde{P}_{A_2}\tilde{P}_{A_1}}^{(A_s)}$. These sequences of transformed correlation matrices  can be used for the calculation of the remainder quantum correlations of the state $\rho$. A general such relation is (for $s=1,\cdots,m$)
\begin{equation}
\mathcal{Q}_{A_s}(\rho_{\tilde{P}_{A_{s-1}}\cdots\tilde{P}_{A_1}})=\parallel \mathcal{T}_{\tilde{P}_{A_{s-1}}\cdots\tilde{P}_{A_1}}^{(A_s)}
-\tilde{P}_{A_s}\mathcal{T}_{\tilde{P}_{A_{s-1}}\cdots\tilde{P}_{A_1}}^{(A_s)}\parallel^2_2,
\end{equation}
where
\begin{eqnarray}
\mathcal{T}_{\tilde{P}_{A_{s-1}}\cdots\tilde{P}_{A_1}}^{(A_s)}=\left\{\vec{V}^{(A_s)} ,
\left[\left[T_{i_{\tilde{s}}i_s[I_{\tilde{s},s}]}^{(A_{\tilde{s}}A_s)}\right]_{\tilde{P}_{A_{s-1}}\cdots\tilde{P}_{A_1}}\right]_{{\tilde{s}}<s}^\T\right.\\ \nonumber
 ,
 \left.\left[\left[T_{i_si_{\tilde{s}}[I_{s,\tilde{s}}]}^{(A_sA_{\tilde{s}})}\right]_{\tilde{P}_{A_{s-1}}\cdots\tilde{P}_{A_1}}\right]_{{\tilde{s}}>s}\right\}.
\end{eqnarray}
Here, for $k=\tilde{s}$ we have
\begin{equation}
\left[T_{i_si_{\tilde{s}}[I_{s,\tilde{s}}]}^{(A_sA_{\tilde{s}})}\right]_{\tilde{P}_{A_{k}}\cdots\tilde{P}_{A_1}}=
\sum_{i_k^{\prime}=1}^{d_{A_k}^2-1}
\left(\tilde{P}_{A_k}\right)_{i_ki_k^{\prime}}\left[T_{i_si_{\tilde{s}}^{\prime}[I_{s,\tilde{s}}]}^{(A_sA_{\tilde{s}})}\right]_{\tilde{P}_{A_{k-1}}\cdots\tilde{P}_{A_1}},
\end{equation}
while for $k\neq\tilde{s}$
\begin{equation}
\left[T_{i_si_{\tilde{s}}[I_{s,\tilde{s}}]}^{(A_sA_{\tilde{s}})}\right]_{\tilde{P}_{A_{k}}\cdots\tilde{P}_{A_1}}=
\sum_{i_k^{\prime}=1}^{d_{A_k}^2-1}
\left(\tilde{P}_{A_k}\right)_{i_ki_k^{\prime}}\left[T_{i_si_{\tilde{s}}[I_{s,\tilde{s}}^{\prime}]}^{(A_sA_{\tilde{s}})}\right]_{\tilde{P}_{A_{k-1}}\cdots\tilde{P}_{A_1}}.
\end{equation}
Note that the string $I_{s,\tilde{s}}^{\prime}$ is the same as the string $I_{s,\tilde{s}}$, except for the index $i_k$ which is changed to $i_k^{\prime}$. In addition, as it is  already mentioned for $i_k=0$, $\left[T_{i_si_{\tilde{s}}[I_{s,\tilde{s}}]}^{(A_sA_{\tilde{s}})}\right]_{\tilde{P}_{A_{k}}\cdots\tilde{P}_{A_1}}
=\left[T_{i_si_{\tilde{s}}[I_{s,\tilde{s}}]}^{(A_sA_{\tilde{s}})}\right]_{\tilde{P}_{A_{k-1}}\cdots\tilde{P}_{A_1}}$.
Then,We will find the total quantum correlation with respect to the sequence $A_1A_2\cdots A_m$ as
\begin{equation}
\mathcal{Q}_{A_1A_2\cdots A_m}=\sum_{s=1}^{m}\mathcal{Q}_{A_s}(\rho_{\tilde{P}_{A_{s-1}}\cdots\tilde{P}_{A_1}}).
\end{equation}
This can  be expressed explicitly as
\begin{equation}
\mathcal{Q}_{A_1A_2\cdots A_m}(\rho)=
\sum_{s=1}^{m}\sum_{k=d_{A_s}}^{d_{A_s}^2-1}(\tau^{(A_s)\downarrow}_{\tilde{P}_{A_{s-1}}\cdots\tilde{P}_{A_1}})_{k},
\end{equation}
where $\left\{(\tau^{(A_s)\downarrow}_{\tilde{P}_{A_{s-1}}\cdots\tilde{P}_{A_1}})_{k}\right\}_{k=1}^{d_{A_s^2-1}}$ are  eigenvalues of the matrix  $\left(\mathcal{T}_{\tilde{P}_{A_{s-1}}\cdots\tilde{P}_{A_1}}^{(A_s)}\right)\left(\mathcal{T}_{\tilde{P}_{A_{s-1}}\cdots\tilde{P}_{A_1}}^{(A_s)}\right)^{\T}$ in non-increasing order. Note that the optimization process included in the definition \eqref{qas} is now turned to
summing  the $d_{A_s}(d_{A_s}-1)$ smaller eigenvalues (counting multiplicities) of the $A_s$-correlation matrices which,  in practice, is an easy task to handle. However, in order to find $\mathcal{Q}_{A_{i_1}\cdots A_{i_m}}$ we should proceed $m$ steps as follows: In the first step, the $A_{i_1}$-correlation matrix should be constructed. Summing  up its last $d_{A_{i_1}}(d_{A_{i_1}}-1)$ non-increasing ordered eigenvalues, we will have $\mathcal{Q}_{A_{i_1}}$, while the optimized projector is $\tilde{P}_{A_{i_1}}=\sum_{j=1}^{d_{A_{i_1}}-1}n_jn_j^{\T}$ with $n_j$ being the corresponding eigenvectors of the first $d_{A_{i_1}}-1$ non-increasing ordered eigenvalues of $A_{i_1}$-correlation matrix. This procedure will be repeated for the other $(m-1)$ subsystems $A_{i_2},\cdots,A_{i_m}$, and the total quantum correlation will be calculated explicitly.  Finally, we define the total quantum correlation of the $m$-partite state $\rho$ as
\begin{equation}\label{Qfinal}
\mathcal{Q}_{\{A_1A_2\cdots A_m\}}=\max_{\mathcal{P}}\{\mathcal{Q}_{A_{i_1}A_{i_2}\cdots A_{i_m}}\},
\end{equation}
where maximum is taken over  all the $m!$ permutations of the $m$ subsystems $\{A_1A_2\cdots A_m\}$. We can define $\mathcal{Q}^\mu_{\{A_1A_2\cdots A_m\}}$ in a similar way
\begin{equation}
\mathcal{Q}^\mu_{\{A_1A_2\cdots A_m\}}=\max_{\mathcal{P}}\{\mathcal{Q}^\mu_{A_{i_1}A_{i_2}\cdots A_{i_m}}\},
\end{equation}
where $\mathcal{Q}^\mu_{A_s}(\rho_{\tilde{P}_{A_{s-1}}\cdots\tilde{P}_{A_1}})$ is defined as
\begin{equation}
\mathcal{Q}^\mu_{A_s}(\rho_{\tilde{P}_{A_{s-1}}\cdots\tilde{P}_{A_1}})=
\frac{1}{\mu[\Tr_{A_s}(\rho_{\tilde{P}_{A_{s-1}}\cdots\tilde{P}_{A_1}})]}\mathcal{Q}_{A_s}(\rho_{\tilde{P}_{A_{s-1}}\cdots\tilde{P}_{A_1}}).
\end{equation}

\section{Total quantum correlation in the one-dimensional Ising model with a transverse field}
\label{s4}

\begin{figure}
\mbox{\includegraphics[scale=0.42]{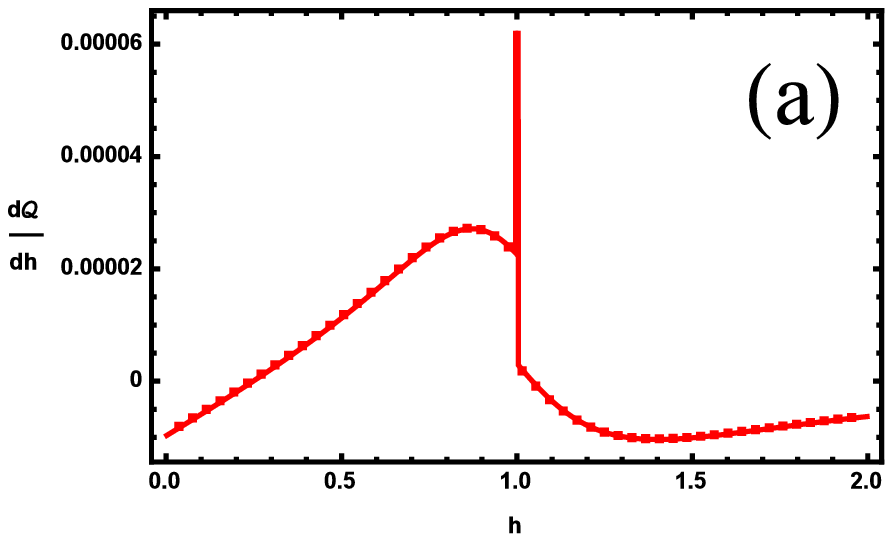}}
\mbox{\includegraphics[scale=0.42]{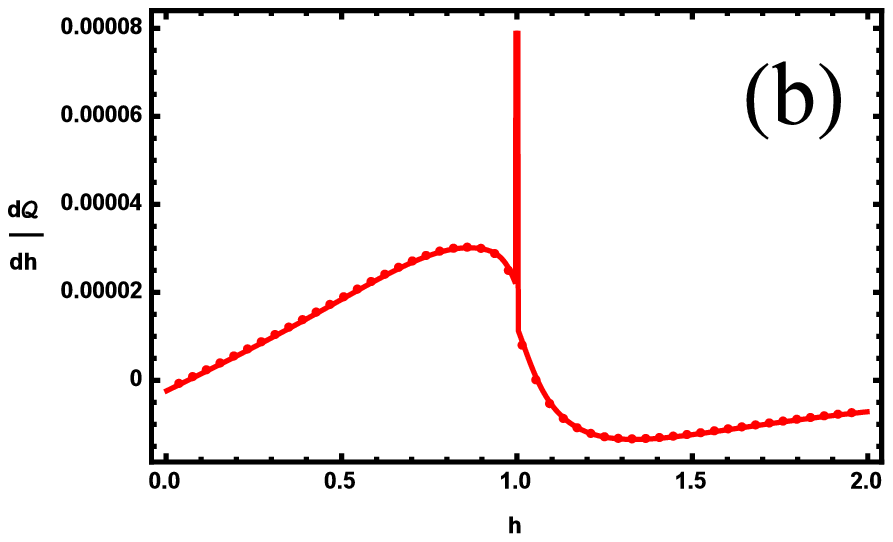}}
\mbox{\includegraphics[scale=0.42]{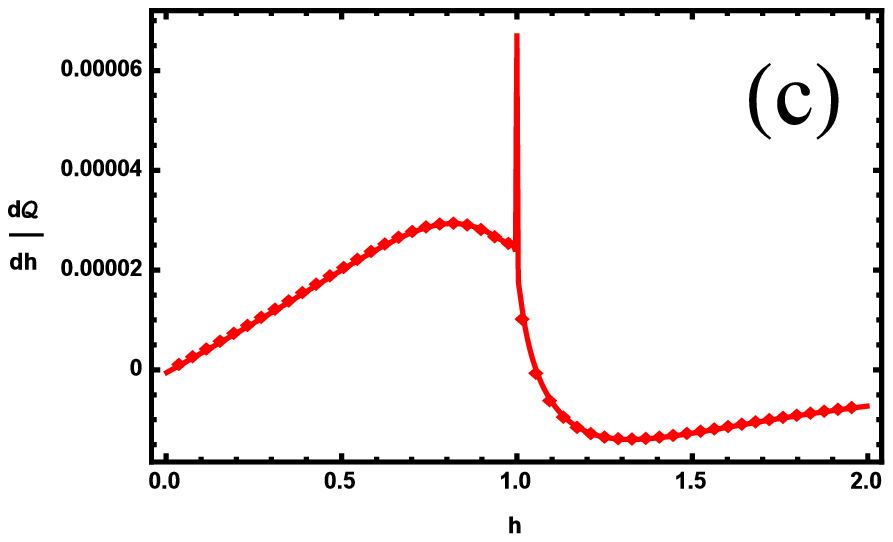}}
\mbox{\includegraphics[scale=0.42]{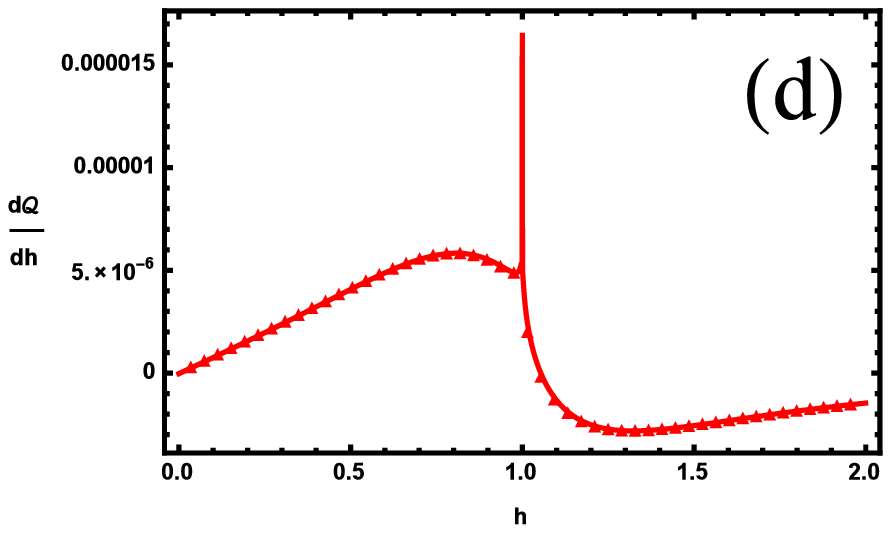}}
\caption{(Color online) First derivative of the total quantum correlation for the nearest-neighbor spins in the chains of (a) 16, (b) 64, (c) 256, and (d) 1024 spins with respect to h in the transverse field Ising chain with $\gamma=\lambda=1$.}
\label{QPT}
\end{figure}

As is mentioned in the introduction, it is reasonable to expect that a measure of quantum correlation should be able to reflect the quantum nature of states. Recently, the quantum phase transition (QPT) in the spin chains and recognizing the critical points by means of the behavior of several quantum correlation measures have been studied. Quantum discord, as a measure of quantum correlations, has been able to detect these critical points in some cases that the other kind of quantum correlations had not possessed this capability. Here, we want to  consider a model which is surveyed from different points of view, such as skew information \cite{Karpat&etal:2014} or different measures of entanglement \cite{Osborne&etal:2002,Latorre&etal:2004,Its&etal:2005,Sadiek&etal:2010,Wei&etal:2011}. The symmetric global quantum discord for this model is studied  by Sarandy {\etal} \cite{Sarandy&etal:2013}.

Consider a spin chain, consists of $m$ spin $1/2$ particles, with the coupling strength $J$ in a transverse magnetic field $h$, described by the  following Hamiltonian
\begin{equation}
H=-\frac{1}{2}J\sum_{i=1}^m\left[(1+\gamma)\sigma_i^x\sigma_{i+1}^x+(1-\gamma)\sigma_i^y\sigma_{i+1}^y\right]-h\sum_{i=1}^m\sigma_i^z,
\end{equation}
where $\sigma_i^{\alpha}$, $\alpha=\{x,y,z\}$ are the Pauli matrices and $\gamma$ is the anisotropy parameter. For the system in the thermal equilibrium, the bipartite reduced density matrix of the system is as  \cite{Sadiek&etal:2010,Zhang&etal:2012,Sarandy&etal:2013}

\begin{equation}
\rho=\left(\begin{array}{cccc}
\rho_{11}&0&0&\rho_{14}\\0&\rho_{22}&\rho_{23}&0\\0&\rho_{23}^{*}&\rho_{33}&0\\\rho_{14}^{*}&0&0&\rho_{44}
\end{array}\right),
\end{equation}
where
\begin{eqnarray}
\rho_{11}&=&\frac{1}{4}+\langle \sigma^z\rangle+\langle \sigma_k^z\sigma_l^z\rangle,\nonumber\\
\rho_{22}&=&\rho_{33}=\frac{1}{4}-\langle \sigma_k^z\sigma_l^z\rangle,\nonumber\\
\rho_{44}&=&\frac{1}{4}-\langle \sigma^z\rangle+\langle \sigma_k^z\sigma_l^z\rangle,\\
\rho_{14}&=&\langle \sigma_k^x\sigma_l^x\rangle-\langle \sigma_k^y\sigma_l^y\rangle,\nonumber\\
\rho_{23}&=&\langle \sigma_k^x\sigma_l^x\rangle+\langle \sigma_k^y\sigma_l^y\rangle.\nonumber
\end{eqnarray}
Here, the magnetization density $\langle \sigma^z\rangle$ and two-point correlation functions $\langle \sigma_k^{\alpha}\sigma_l^{\alpha}\rangle$ can be directly obtained from the exact solution of the model \cite{Pfeuty:1970,Barouch&etal:1970,Barouch&McCoy:1971}. Defining $\lambda=\frac{J}{h}$, it is shown that the model has a critical point at $\lambda=1$ which can be interpreted as a ferromagnetic-paramagnetic QPT for this chain \cite{Osborne&etal:2002,Sadiek&etal:2010}.

Now, fixing $\gamma=1$ and $J=1$, we calculate the total quantum correlation of two neighboring spins as a function of $h$ and depict the quantum correlation derivative with respect to the transverse field coupling  $h$ (see Fig. \ref{QPT}). It is clear from this figure that the critical point occurs in $h=1$,  where for the special case considering here, it is equivalent to the aforementioned situation $\lambda=1$ for the quantum critical point. It follows that our measure is able to identify the ferromagnetic-paramagnetic QPT for the model.

Although this QPT is also identified by other measures, such as global quantum discord, our  measure has the  property of being computable for any state, while the other measures of quantum correlation can be computed only for some special states. Accordingly, it seems that  our measure can be a useful tool to detect the quantum critical points for the states whose other measures  can not be calculated analytically. It is worth to mention that, for this special example, $\mathcal{Q}^{\mu}$ and $\mathcal{Q}$ show similar manners and predict the QPT in the same point, so  we just illustrated  $\mathcal{Q}$ for simplicity.

\section{Conclusion}
\label{s5}
In this paper we have generalized the notion of $A$-correlation matrix of a bipartite system $\mathcal{H}^{(A)}\otimes\mathcal{H}^{(B)}$ to the multipartite case $\mathcal{H}^{(A_1)}\otimes\mathcal{H}^{(A_2)}\otimes\cdots\otimes\mathcal{H}^{(A_m)}$, and presented a necessary and sufficient condition for classicality of such system with respect to the subsystem $A_s$.  Based on this, we have presented a computable measure of the total quantum correlation of a generic $m$-partite state. Some illustrative examples, both in the bipartite and tripartite systems, have been also  presented and a comparison with the other measures has been investigated. We have also considered a spin chain and investigated the ability of the measure to detect quantum critical points.   This paper, therefore, can be regarded as a further development in the calculation of quantum correlations of multipartite systems which can be used as an indicator for the quantumness of  multipartite systems. Further study on the measure, and  the role of the $p$-norm, as well as the arbitrariness in choosing the function of the coherence vector of the unmeasured subsystems, are under considerations.

\bibliography{BIB}{}

\end{document}